\title[Gradient and Projection Free Distributed Online Min-Max Resource Optimization]{Gradient and Projection Free Distributed Online \\Min-Max Resource Optimization}
\author{%
 \Name{Jingrong Wang} \Email{jr.wang@mail.utoronto.ca}\\
 \Name{Ben Liang} \Email{liang@ece.utoronto.ca}\\
 \addr Department of Electrical and Computer Engineering, University of Toronto, Canada%
}
\begin{document}

\maketitle

\thispagestyle{plain}

\begin{abstract}%
	We consider distributed online min-max resource allocation with a set of parallel agents and a parameter server.
	Our goal is to minimize the pointwise maximum over a set of time-varying and decreasing cost functions, without a priori information about these functions.
	We propose a novel online algorithm, termed Distributed Online resource Re-Allocation (\texttt{DORA}), where non-stragglers learn to relinquish resource and share resource with stragglers.
	A notable feature of \texttt{DORA} is that it does not require gradient calculation or projection operation, unlike most existing online optimization strategies. 
	This allows it to substantially reduce the computation overhead in large-scale and distributed networks.
	 We analyze the worst-case performance of \texttt{DORA} and derive an upper bound on its dynamic regret for non-convex functions. 
	We further consider an application to the bandwidth allocation problem in distributed online machine learning.
	Our numerical study demonstrates the efficacy of the proposed solution and its performance advantage over gradient- and/or projection-based resource allocation algorithms in reducing wall-clock time.%
\end{abstract}

\begin{keywords}%
	Online min-max resource allocation, online learning.%
\end{keywords}

\section{Introduction}\label{sec: intro}

We consider an online min-max resource allocation problem with a set of agents.
The objective is to minimize the accumulation of the pointwise maximum over a set of \textit{time-varying} local cost functions of the agents, subject to a limited resource budget.
The cost functions are decreasing with respect to the resource allocated to that agent.
Optimization decisions are made round by round, while the cost functions vary arbitrarily over time and are revealed only \textit{after} decision making in each round. 
Our objective is to decide a sequence of allocation schemes as delayed information about the cost functions becomes available over time. 

Many practical applications provide strong motivation for this special type of minimization problem, e.g., minimum spanning tree~(\cite{Y98}), online routing~(\cite{H19}), fair resource allocation in wireless cellular networks~(\cite{HM05}), and data/task parallelism in distributed learning/computing~(\cite{DG08}). 
Take synchronous distributed learning in a parameter-server architecture as an example.
The learning model is updated only after all agents send their local gradients to the parameter server. 
Therefore, the training time in each round, which consists of both the computation and communication times, is determined by the \textit{straggler}, i.e., the last agent to send its local gradient.
Thus, communication resource allocation for distributed computing can be formulated as a min-max optimization problem, minimizing the worst delay among all agents~(\cite{DZ17}).
Furthermore, both the computation and communication times are time-varying in an unpredictable fashion.
This is mainly due to the wireless fading channels and the randomness of computing demands and available capacity.
In this dynamic environment, an \textit{online} min-max resource allocation algorithm is of interest.

Online min-max optimization belongs to the family of online optimization, which has been extensively studied in a wide variety of real-life applications, but most of the existing works focus on convex cost functions~(\cite{SS11}).
Typically, an online algorithm proceeds in a round-by-round manner.
In each round, the agent first selects a decision. 
Then, the environment reveals to the agent the cost function for this round, and the agent updates its strategy for the next round accordingly. 
In this context, \textit{regret} is introduced to measure the performance of an online algorithm versus a comparator.
In particular, the \textit{static} regret uses a fixed minimizer of the accumulated cost over time as the comparator.
However, considering the inherent dynamics of real-world systems, the \textit{dynamic} regret is practically more meaningful~(\cite{Z03}).
It measures the difference between the sequence generated by an online algorithm and a sequence of time-varying comparators, most commonly the instantaneous minimizers.

Online min-max optimization problems have unique characteristics.
First, the pointwise maximum among the set of cost functions is generally not continuously differentiable, so that we require subgradients instead of gradients.
Second, the performance guarantees of computationally efficient projection-free algorithms, such as (\cite{HK12}), require strongly convex and smooth cost functions, which do not apply to the min-max problem.
The need for \textit{distributed} implementation brings additional challenges.
Unlike in the more common min-sum optimization, the max operation couples multiple local cost functions, requiring some decomposition before problem-solving in a distributed manner. 
This is exacerbated by the coupling constraints, e.g., due to shared communication bandwidth among the agents, which adds to the unpredictability of the time-varying global cost function in the online setting.

In response to the above challenges, we propose a new distributed online algorithm named Distributed Online resource Re-Allocation (\texttt{DORA}), to solve an online min-max optimization problem in a multi-agent system with coupling linear constraints. 
Our main contributions include the following:
\textit{First}, \texttt{DORA} is a distributed algorithm that leverages the unique structure of the online min-max  optimization problem.
We denote the agent with the highest cost as \textit{straggler}.
Under \texttt{DORA}, the agents relinquish a carefully chosen amount of resource to the straggler in the previous online round to help improve its performance. 
To achieve this, each agent only obtains an incomplete view of the system and does not need to keep a full copy of the others' decisions.
More importantly, \texttt{DORA} does not require gradient calculation or projection operation, so it has much lower computational complexity than generic online solutions.
\textit{Second},  We analyze the dynamic regret of \texttt{DORA} for general decreasing cost functions. 
This compares favorably with the best existing solutions considering \texttt{DORA}'s reduced computational complexity, making it an attractive alternative.
\textit{Finally},  We apply the proposed solution to online resource allocation in distributed machine learning, which trains a convolutional neural network model for image identification with a set of edge devices in a parameter-server architecture. 
Our experimental results demonstrate the performance advantage of \texttt{DORA} over existing online algorithms that are based on gradient and/or projection, in terms of significantly faster convergence and reduced training time.

\section{Background and Related Work}\label{sec: related}

\subsection{Dynamic Regret of Gradient-based and Projection-based Online Optimization}

In online optimization, an agent aims to make a sequence of decisions to minimize the accumulation of a sequence of cost functions, without knowing the cost functions ahead of time. 
In each round $t$, the agent selects a decision $\mathbf{x}_t$ from a feasible set $\mathcal{F}$.
Then, the environment reveals cost function $f_t$ to the agent, and the agent suffers an instantaneous loss $f_t(\mathbf{x}_t)$ in this round. 

Regret is introduced to measure the performance of an online algorithm.
We focus on the more practically meaningful dynamic regret.
In recent literature, the dynamic regret is commonly defined as
$\textnormal{Reg}^d_T=\sum_{t\in\mathcal{T}}f_t(\mathbf{x}_t)-\sum_{t\in\mathcal{T}}f_t(\mathbf{x}_t^*)$,
where $\mathbf{x}_t^*\in \arg\min_{\mathbf{x}\in\mathcal{F}} f_t(\mathbf{x})$. 
Since the cost functions fluctuate arbitrarily, the dynamic regret in the worst case scales linearly. 
Typically, it is related to some \textit{regularity measures}, which indicate how dynamic the environment is. 
For example, the path-length of the dynamic minimizers is defined as
	$P_T=\sum_{t=2}^{T}||\mathbf{x}_{t-1}^*-\mathbf{x}_{t}^*||_2$.
For centralized online convex optimization (OCO), \cite{Z03} proved that with online (projected) gradient descent (OGD), the dynamic regret is upper bounded by $O(\sqrt T  (1+P_T))$. 
\cite{ZL18} improved the bound to $O(\sqrt{T (1+P_T)})$ by running multiple OGD operations in parallel.
For centralized online non-convex optimization (ONCO), \cite{HM2020} proved that the dual averaging algorithm enjoys a dynamic bound of $O(T^{\frac{2}{3}}V_T^{1/3})$, where $V_T$ is the time-accumulated variation of the cost functions.
If the cost functions are pseudo-convex, \cite{GL18} improved the bound to $O(\sqrt{T (1+P_T)})$.
Additional dynamic regret bounds have also been derived for centralized OCO algorithms, e.g., \cite{MS16, ZY17, BG15}.

In distributed implementation, most of the recent works have proposed methods that provide dynamic regret guarantee under various assumptions on the convexity and smoothness of the objective functions (\cite{SJ17, ZR19, DB19, LJ20, SK20, EL20, YL20, LY21}).
To the best of our knowledge, for gradient/projection-based distributed OCO, the tightest known dynamic regret bound for general convex cost functions is $O(\sqrt{T (1+P_T)})$~(\cite{SJ17}).
There is no dynamic regret bound developed for distributed ONCO.

In all of the above works, the local solution of each agent in each round is updated with the calculated gradient and needs to be projected back to the domain of interest to restore feasibility.
However, even the projection onto a simplex requires non-trivial computation time (\cite{LY09}).
In contrast, our work does not require gradient calculation or projection operation.

\subsection{Gradient-free and Projection-free Online Optimization}


Existing gradient-free algorithms replace exact gradient calculation with efficient gradient approximation (\cite{LC20}).
The FKM algorithm, named after its authors, uses spherical gradient estimators and then conducts gradient descent with projection (\cite{FL05, HMZ20}).
For OCO, with carefully designed diminishing step sizes, FKM and its variants can achieve dynamic regret that is upper bounded by $O(T^{\frac{4}{5}}V_T^{\frac{1}{5}})$ (\cite{BG15}). 
For ONCO, \cite{HM21} showed that the dual averaging algorithm with bandit feedback achieves a dynamic regret bound of $O(T^{\frac{n+2}{n+3}}V_T^{\frac{1}{n+3}})$, where $n\geq 1$ is the dimension of the decision variables.
This bound is worse than that of gradient-based algorithms due to the biased gradient estimation.

Among projection-free solutions, the Frank-Wolfe method, also known as the conditional gradient method, replaces projections with linear optimization over the feasible set~(\cite{FW56}).
In the centralized setup, \cite{HK12} proposed an online conditional gradient (OCG) algorithm, achieving $O(T^{\frac{3}{4}})$ static regret for convex functions.
If the functions are convex and smooth, with the Follow-the-Perturbed-Leader method, the static regret and dynamic regret of OCG were improved to $O(T^{\frac{2}{3}})$~(\cite{HM20}) and $O(T^{\frac{2}{3}}V_T^{1/3})$~(\cite{WX21}), respectively.
There is no dynamic regret bound developed for projection-free ONCO.

In distributed implementation, \cite{ZZ17} proposed a distributed variant of OCG, achieving the same static regret bound as the centralized one.
It is worth noting that the regret bounds of OCG are also worse than those of projection-based algorithms, suggesting performance penalty in exchange for reducing computation complexity.
Since the above works require strongly convex and smooth cost functions, their performance guarantees do not apply to the min-max resource allocation problem in our work.
Furthermore, no dynamic regret bound is known for distributed projection-free OCG.

Unlike existing gradient/projection-free algorithms, we propose to update the decisions online by directly calculating the amount of resource that the agents can relinquish and share with the stragglers.
We will show in Section~\ref{ch: performance} that this can substantially reduce the computation time in comparison with FKM and OCG.

\subsection{Min-Max Optimization}
The offline min-max optimization problem, i.e., with fixed cost functions known ahead of time, has been studied in~\cite{LL06, SN11, SN13, NF19, LL20, WP21}. 
These methods cannot be applied to our online problem.
For online min-max convex optimization, \cite{BC19} directly applied OGD to the min-max vertex cover problem, using a subgradient whenever the gradient does not exist.
However, they overlooked the special structure of the min-max problem.
As a result, there was no improvement in the static regret over general online algorithms. 
There is no existing work on distributed online algorithms specifically designed for the min-max problem.

\section{Problem Formulation}\label{sec: system}

Consider a distributed system with a set of parallel agents $\mathcal{N}=\{1,..., N\}$ and a parameter server.
The time is slotted by rounds $\mathcal{T}=\{1,...,T\}$.

In each round $t$, let $x_{i,t}$ denote the amount of resource allocated to agent $i$. 
Corresponding to $x_{i,t}$, the local cost function at agent $i$ is $f_{i,t}(x_{i,t})$. 
We note that $f_{i,t}$ may depend on other random factors, so it varies over time.
We assume that $f_{i,t}$ at any time $t$ is decreasing, but not necessarily strictly decreasing, in $x_{i,t}$. 
As shown in Section \ref{ch: performance}, an important example of decreasing cost function is where $x_{i,t}$ represents the allocated channel bandwidth and $f_{i,t}$ represents the communication delay plus any quantity independent of $x_{i,t}$.

Let $\mathbf{x}_t = (x_{1,t},...,x_{N,t})$. 
Then, the decision variables $\{\mathbf{x}_t\}_{t\in\mathcal{T}}$ is a sequence of resource allocation schemes.
The global cost function in round $t$ is the pointwise maximum over the set of time-varying local cost functions $\{f_{i,t},~ i\in\mathcal{N}\}$, i.e., 
\begin{equation}\label{eq: max}
	\begin{aligned}
		f_t(\mathbf{x}_t) = \max_{i\in\mathcal{N}}f_{i,t}(x_{i,t}).
	\end{aligned}
\end{equation}
As an example, in  edge learning, since each round is completed only when the last agent's task is finished, $f_t(\mathbf{x}_t)$ represents the duration of round $t$. 

We assume that the agents can observe $f_{i,t}$ but only at the end of round $t$ after each agent has been allocated $x_{i,t}$ and completed its task. 
In other words, we do not assume prior knowledge on how $x_{i,t}$ is mapped to $f_{i,t}$. 
This matches the standard form of online optimization and is a natural system model in many practical applications, e.g, due to the uncertain available processing power and wireless channel conditions in edge learning. 
Furthermore, once all agents complete their tasks at the end of round $t$, all local costs $f_{i,t}(x_{i,t})$, and thus the global cost in round $t$, are known to the parameter server.

Our objective is to minimize the accumulation of the sequence of global cost functions over time, subject to a limited resource budget. 
For notation simplicity, the resource budget is normalized to 1.
Then, the problem is formulated as follows:
\begin{align}
	\min_{\{\mathbf{x}_t\}_{t\in\mathcal{T}}} \quad & \sum_{t\in\mathcal{T}}f_t(\mathbf{x}_t),  \label{eq: minmax}\\
	\textrm{s.t.} \quad & \sum_{i\in\mathcal{N}}x_{i,t}\leq 1, \forall t\in\mathcal{T},~\label{eq:sum}\\
	& x_{i,t}\geq 0, \forall i \in \mathcal{N}, \forall t\in\mathcal{T}.~\label{eq:x_i}  
\end{align}

As explained in Section \ref{sec: intro}, there are broad applications to this optimization formulation.
If the information of local cost functions were known a priori, this problem could be solved as a traditional offline optimization problem.
However, since in many applications the cost functions become known only at the end of a round, we seek an \textit{online} solution to this problem.
Moreover, in large-scale systems, a centralized solution, e.g., one completely computed by the parameter server, would require intense information exchange that incurs high communication overhead and high computation requirement at the parameter server.
Instead, we are interested in developing a \textit{distributed} algorithm.
As explained in Section \ref{sec: related}, existing distributed online solutions are inefficient for this problem. Therefore, we will next present a new gradient-free and projection-free algorithm that substantially reduces the required computation.

\section{Distributed Online Resource Re-Allocation}\label{sec: algo}
In this section, we present the design of \texttt{DORA} to solve the online min-max optimization problem in (\ref{eq: minmax})-(\ref{eq:x_i}).
In \texttt{DORA}, the agents simultaneously learn to track the unknown local cost functions over time and solve the problem with help from the parameter server.

\subsection{Reformulation of the Cost Function in Round \textit{t}}\label{sec: decompose}

To transform the global cost function toward distributed computation, we first consider the \textit{instantaneous} optimization problem in each round $t$, i.e., where the objective is $\min_{\mathbf{x}_t} \max_{i\in\mathcal{N}} f_{i,t}(x_{i,t})$.
The equivalent epigraph representation of this problem is
\begingroup
\allowdisplaybreaks
\begin{align}
	\min_{\eta_t, \mathbf{x}_t} \quad &\eta_t, \label{eq: epi} \\
	\textrm{~~s.t.}
	\quad & f_{i,t}(x_{i,t})\leq\eta_t, \forall i \in \mathcal{N},~\label{eq:eta}\\
	& \sum_{i\in\mathcal{N}}x_{i,t}\leq 1,~\label{eq:sum_}\\
	&x_{i,t}\geq 0, \forall i \in \mathcal{N}.~\label{eq:x_i_}
\end{align}
\endgroup
We can further penalize the constraints (\ref{eq:eta}) with some multiplier $r_{i,t}>1$. An equivalent form of problem (\ref{eq: epi}) can be obtained as follows (\cite{SN13}):
\begingroup
\allowdisplaybreaks
\begin{align}
	\min_{\eta_t, \mathbf{x}_t} \quad &\sum_{i\in\mathcal{N}}\frac{\eta_t}{N}+r_{i,t}[f_{i,t}(x_{i,t})-\eta_t]^{+}, \label{eq:cost} \\
	\textrm{~~s.t.}
	\quad & (\ref{eq:sum_} )-(\ref{eq:x_i_}).\nonumber
\end{align}
\endgroup
where $[\cdot]^{+} =\max\{0, \cdot\}$.
An important property of problem (\ref{eq:cost}) is that $r_{i,t}$ can be \textit{arbitrarily} chosen by each agent independently of the others, which allows distributed implementation.

However, we note that (\ref{eq:cost}) is not continuously differentiable. 
If we used it over time to construct a conventional online optimization, we would face the same challenges as solving our original online min-max problem. 
Therefore, we seek a new solution. 
Next, we will present how problem (\ref{eq:cost}) can be used as the starting point to build the proposed \texttt{DORA} algorithm.
\subsection{Distributed Online Optimization}
A naive solution to problem (\ref{eq:cost}) is to use Lagrange decomposition.
By penalizing the coupling constaint (\ref{eq:sum_}) with some partial Lagrange multiplier $\pi_t\geq 0$, the partial Lagrangian of (\ref{eq:cost}) is
\begin{equation}
	\begin{aligned}\label{eq: lagrangian}
		L_t(\eta_t, \mathbf{x}_t, \pi_t)=\sum_{i\in\mathcal{N}}\left(\frac{\eta_t}{N}+r_{i,t}[f_{i,t}(x_{i,t})-\eta_t]^{+}\right)
		+\pi_t\left(\sum_{i\in\mathcal{N}}x_{i,t}- 1\right).
	\end{aligned}
\end{equation}
For convex problems, strong duality holds under Slater's condition.
However, even for convex problems, it is unclear how to find the optimal $\pi_t^*$.

To overcome this obstacle, we first observe in the following lemma that a \textit{local} version of the Lagrangian admits a closed-form minimizer for any choice of $\eta_t$ in the range of $f_t(\mathbf{x})$ for $\mathbf{x}$ satisfying (\ref{eq:sum_}) and (\ref{eq:x_i_}), even though the optimal $\pi_t^*$ is unknown.
Let $x'_{i,t}$ be an arbitrary solution to $ f_{i,t}(x'_{i,t})=\eta_t$, i.e., 
\begin{align}\label{eq: x'}
	x'_{i,t}
	\in   f_{i,t}^{-1}(\eta_t).
\end{align}
Since $ f_{i,t}$ is a monotonic function, root $x'_{i,t}$ can be found efficiently by bisection (\cite{HP76}).
As we will see later, this is an important quantity to our algorithm design. Further details are given in Remark \ref{rm: 1}. 
For now, in the following lemma, we show that $x'_{i,t}$ is a minimizer of the local Lagrangian.
Its proof can be found in Appendix A.

\begin{lemma}\label{lemma: 1}
	Suppose $f_{i,t}$ is convex for all $i$ and $t$. 
	Let $(\mathbf{x}_t^*,\eta_t^*,\pi_t^*)$ denote a minimizer of (\ref{eq: lagrangian}). 
	Then, for any choice of $\eta_t$ in the range of $f_t(\mathbf{x})$ for $\mathbf{x}$ satisfying (\ref{eq:sum_}) and (\ref{eq:x_i_}), $x'_{i,t}$ is a minimizer of the local Lagrangian
	\begin{equation}\label{eq: local}
		\begin{aligned}
			L_{i,t}(x_{i,t})=r_{i,t}[f_{i,t}(x_{i,t})-\eta_t]^{+}+\pi_{t}^* x_{i,t}.
		\end{aligned}
	\end{equation}
\end{lemma}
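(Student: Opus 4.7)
The plan is to exploit the piecewise structure of $L_{i,t}$ at $x'_{i,t}$, use convexity to characterize minimizers through the inclusion $0\in\partial L_{i,t}(x'_{i,t})$, and then close the argument via the KKT conditions of the full partial Lagrangian~(\ref{eq: lagrangian}) evaluated at the saddle point $(\mathbf{x}_t^*,\eta_t^*,\pi_t^*)$.

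First I would rewrite $L_{i,t}$ as a piecewise function split at $x'_{i,t}$. Since $f_{i,t}$ is decreasing and $f_{i,t}(x'_{i,t})=\eta_t$, the hinge $[f_{i,t}(x)-\eta_t]^+$ equals $f_{i,t}(x)-\eta_t$ for $x<x'_{i,t}$ and vanishes for $x\ge x'_{i,t}$. Each piece is convex (the first by convexity of $f_{i,t}$, the second being linear), and the slope jumps upward at the junction by $-r_{i,t}f'_{i,t}(x'_{i,t})\ge 0$ because $f'_{i,t}\le 0$, so $L_{i,t}$ is globally convex on $[0,\infty)$. Its subdifferential at $x'_{i,t}$ is then the interval $[r_{i,t}f'_{i,t}(x'_{i,t})+\pi_t^*,\ \pi_t^*]$, and by convex minimization $x'_{i,t}$ minimizes $L_{i,t}$ if and only if
\begin{equation*}
0 \le \pi_t^* \le -r_{i,t}f'_{i,t}(x'_{i,t}).
\end{equation*}

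The left inequality is immediate since $\pi_t^*$ is the multiplier of an inequality constraint. For the right inequality I would invoke stationarity of~(\ref{eq: lagrangian}) with respect to $x_{i,t}$ at the primal minimizer, which gives $-\pi_t^*\in r_{i,t}\,\partial[f_{i,t}(x_{i,t}^*)-\eta_t^*]^+$, and split into two cases. If $f_{i,t}(x_{i,t}^*)<\eta_t^*$, stationarity forces $\pi_t^*=0$ and the claim is trivial because $-r_{i,t}f'_{i,t}(x'_{i,t})\ge 0$. If $f_{i,t}(x_{i,t}^*)=\eta_t^*$, stationarity yields $\pi_t^*\le -r_{i,t}f'_{i,t}(x_{i,t}^*)$; since $\eta_t$ lies in the range of $f_t$ on the feasible set, we have $\eta_t\ge\eta_t^*$, so monotonicity of $f_{i,t}$ gives $x'_{i,t}\le x_{i,t}^*$, and convexity of $f_{i,t}$ (hence monotonicity of $f'_{i,t}$) upgrades the inequality to the desired $\pi_t^*\le -r_{i,t}f'_{i,t}(x'_{i,t})$.

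The main obstacle I anticipate is the non-smoothness of $[\cdot]^+$ together with the case split between max-achieving and non-max-achieving agents; both are handled cleanly by working with subdifferentials and by noting that any convex $f_{i,t}$ admits a non-decreasing subgradient selection, so the derivative inequalities above remain valid even when $f_{i,t}$ is only non-smooth convex.
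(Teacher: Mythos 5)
Your proof is correct and follows essentially the same route as the paper: both establish $x'_{i,t}\leq x_{i,t}^*$ from $\eta_t\geq\eta_t^*$, invoke KKT stationarity of (\ref{eq: lagrangian}) at $x_{i,t}^*$, and use monotonicity of the subgradients of the convex decreasing $f_{i,t}$ to conclude optimality of $x'_{i,t}$ for the piecewise local Lagrangian (you phrase the conclusion as $0\in\partial L_{i,t}(x'_{i,t})$ by convexity, the paper as $L_{i,t}$ decreasing then increasing across $x'_{i,t}$ --- the same fact). Your explicit case split on whether $f_{i,t}(x_{i,t}^*)=\eta_t^*$ is actually slightly more careful than the paper's treatment of the hinge's subdifferential at $x_{i,t}^*$, but it does not change the substance of the argument.
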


	
	Unfortunately, if each agent updates its $x_{i,t}$ based on (\ref{eq: x'}), the resultant solution is still not optimal to problem (\ref{eq:cost}).
	This is because the global cost can be further reduced by simultaneously allocating the remaining resource budget to each agent.
	Interestingly, a similar phenomenon has been observed in general unconstrained sum minimization problems (\cite{LY2020}).
	This inspires the proposed \texttt{DORA} algorithm, which will be shown in Section \ref{sec: analysis} to provide bounded dynamic regret for our online min-max problem.

	The intuition of \texttt{DORA} is to make fast agents relinquish an appropriate amount of resource and give it to the agent who was the straggler in the \textit{previous} online round.
	Thus, in each round, the agents who are non-stragglers in the previous online round move towards the minimizer of their local Lagrangian in (\ref{eq: local}), while the stragglers will be allocated the remaining resource budget.

\algrenewcommand\algorithmicrequire{\textbf{Agent $i=1,2,\ldots, N$ \textbf{runs in parallel:}}}
\algrenewcommand\algorithmicensure{\textbf{Parameter server runs:}}

\begin{algorithm}

		\caption{Distributed Online Resource Re-Allocation}\label{alg:online_alg}
		\SetAlgoLined
	\textbf{Input:} Number of rounds $T$, and step size $\alpha$.\\
	\textbf{Initialization:} Arbitrary initial allocation $\mathbf{x}_{0}\in\mathcal{F}$. \\
	\textbf{Agent $i=1,2,\ldots, N$ \textbf{runs in parallel:}}
	
		 \For{round $t=1,2,..., T$}{
	 Observe $f_{i,t-1}(\cdot)$; \Comment{Delayed information on cost} \label{alg: observe}\\
		 Receive $f_{t-1}(\mathbf{x}_{t-1})$ from the parameter server; \label{alg: receive}\\
		 Compute $x_{i,t}$ using (\ref{eq: updateA}); \Comment{Resource relinquishment}\label{alg: update}\\
		Send $x_{i,t}$ to the parameter server;\label{alg: syn}
}

\textbf{Parameter server runs:}

			\For{round $t=1,2,..., T$}{
 Receive $x_{i,t}$ from agent $i$; \label{alg: aggregate}\\
 Update $x_{s_t,t}$ using (\ref{eq: updateB}); \label{alg: update_} \Comment{Resource re-allocation}\\
 Allocate resource based on $\mathbf{x}_{t}$;\label{alg: alloc} \\
 Observe and send $f_{t}(\mathbf{x}_t)$ to all agents;\label{alg: reveal_}
}
\end{algorithm}

	The pseudocode of \texttt{DORA} is shown in Algorithm \ref{alg:online_alg}.
After observing the local cost function $f_{i,t-1}(\cdot)$ in the previous round and receiving the feedback $f_{t-1}(\mathbf{x}_{t-1})$ from the parameter server, each agent adjusts its decision by moving towards the minimizer of its local Lagrangian with any step size $0<\alpha<1$, i.e., 
\begin{equation}\label{eq: updateA}
	\begin{aligned}
		x_{i,t}\leftarrow x_{i, t-1}-\alpha (x_{i,t-1}-x'_{i, t-1}), \forall i \in\mathcal{N},
	\end{aligned}
\end{equation}
where $x'_{i,t-1}$ is a minimizer of (\ref{eq: local}) in round $t-1$ and can be computed as in (\ref{eq: x'}).
Afterward, the local decisions are synchronized with the parameter server. 
After receiving $x_{i,t}$ from each agent, the resource relinquished by non-stragglers is re-allocated to the agent who is the straggler in the previous round:
\allowdisplaybreaks\begin{align}
	x_{s_{t-1},t}\leftarrow 1-\sum_{i\neq s_{t-1}}x_{i,t}
	= x_{s_{t-1},t-1}-\alpha\sum_{i\neq s_{t-1}}(x_{i,t-1}'-x_{i, t-1}),\label{eq: updateB}
\end{align}
where $s_{t-1}=\arg\max_{i\in\mathcal{N}}f_{i,{t-1}}(x_{i,{t-1}})$ denotes the straggler in round ${t-1}$.
Then, the parameter server observes the value of the global cost incurred in this round and sends it to all agents.
It can be easily verified that constraint (\ref{eq:sum}) is always satisfied in re-allocation since the relinquished resource is shared among stragglers.
\newtheorem{rremark}{Remark}
\begin{rremark}\label{rm: 1}
	$x'_{i,t-1}$ can be interpreted as the minimum resource agent $i$ needs to maintain such that $f_{i,t-1}(x'_{i,t-1})\leq f_{t-1}(\mathbf{x}_{t-1})$.
	Thus, $(x_{i,{t-1}}-x'_{i, {t-1}})$ represents the maximum resource the agent can relinquish without making itself the straggler based on the historical loss function in round $t-1$.
	Furthermore, $\alpha(x_{i,{t-1}}-x'_{i, {t-1}}),\forall \alpha\in[0,1]$ can be viewed as the amount of resource that agent $i$ chooses to relinquish and share with the straggler.
	It is easy to verify that the straggler does not relinquish its resource at this step since $x_{s_{t-1},{t-1}}=x'_{s_{t-1}, {t-1}}$.
\end{rremark}

\begin{rremark}
	The rationale behind the step size $\alpha<1$ in (\ref{eq: updateA}) is that, for instantaneous optimization, if each agent maintains the minimum resource $x'_{i,t}$, the total resource relinquished to the straggler $s_t$ could exceed its optimal allocation $x^*_{s_t,t}$. 
	Then the total resource shared by the non-stragglers would exceed the straggler's needs. 
	The detailed proof is presented in the next section.
	Rather than directly minimizing (\ref{eq: local}), the agents who are non-stragglers in the previous round take a step toward the minimum with step size $\alpha$.
\end{rremark}

From Algorithm \ref{alg:online_alg}, we see that \texttt{DORA} has several advantages in terms of its ease of implementation: 1) it avoids gradient calculation and projection onto the feasible set, 2) each agent only needs to keep its local variable rather than a full copy of all decision variables, and 3) communication only needs to be maintained between the agents and the server.

We emphasize here that the updates in (\ref{eq: updateA}) and (\ref{eq: updateB}) are different from gradient descent. 
In fact, since each agent only needs to compute $x'_{i,t-1}$ in each round, independently of the other agents, the overall computation complexity of DORA is $O(N)$ per round. 
In comparison, applying projected gradient-based online algorithms would lead to $O(N^2)$ computation complexity per round for gradient calculation alone (\cite{BV04}). 
Furthermore, since the computation complexity of projection is $O(N \log N)$ in Euclidean space~(\cite{LY09}), the overall computation complexity would be at least $O(N^2 \log N)$ per round.
Therefore, the $O(N)$ complexity of DORA is a substantial improvement. 
%
	
	\section{Dynamic Regret Analysis}\label{sec: analysis}
	Besides ease of implementation, another main advantage of \texttt{DORA} is that it provides strong performance guarantees in the form of bounded dynamic regret. 
	For dynamic regret analysis, we make the following assumptions: 
	\newtheorem{assumption}{Assumption}
	\begin{assumption}[Monotonicity]\label{eq: cvx}
		If $x\leq y$, then $f_{i,t}(x)\geq f_{i,t}(y),\forall i \text{~and~} t$.
	\end{assumption}
	\begin{assumption}[Lipschitz]\label{eq: gradient}
		$|f_{i,t}(x)-f_{i,t}(y)|\leq L ||x-y||,\forall i \text{~and~} t$.
	\end{assumption}
	
	\begin{theorem}\label{Thm: regret}
		Consider the online min-max problem defined in (\ref{eq: minmax}) with Assumptions \ref{eq: cvx} and \ref{eq: gradient}.
		With a fixed step size $\alpha$, the dynamic regret $\textnormal{Reg}^d_T$ for the sequence of decisions $\mathbf{x}_t$ generated by \texttt{DORA} is upper bounded by
			$\textnormal{Reg}^d_T\leq \sqrt{TL^2(\frac{3}{2\alpha}+\frac{P_T}{\alpha}+\frac{T(4+\alpha)}{2})}$,
		where $P_T$ is the path-length of the dynamic minimizers.
	\end{theorem}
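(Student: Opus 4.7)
The plan is to translate the regret into an accumulated squared distance between the iterate $\mathbf{x}_t$ and the instantaneous minimizer $\mathbf{x}_t^*$, control that distance by a one-step recursion built from the DORA update, and then telescope. Writing $D_t:=\|\mathbf{x}_t-\mathbf{x}_t^*\|_2^2$, the first step uses Assumption \ref{eq: gradient} coordinate-wise on the pointwise maximum: since $f_t(\mathbf{x})=\max_i f_{i,t}(x_i)$ with each $f_{i,t}$ being $L$-Lipschitz, a standard argument gives $f_t(\mathbf{x}_t)-f_t(\mathbf{x}_t^*)\leq L\max_i|x_{i,t}-x_{i,t}^*|\leq L\|\mathbf{x}_t-\mathbf{x}_t^*\|_2$. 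Summing over $t$ and applying Cauchy--Schwarz yields $\textnormal{Reg}_T^d\leq L\sqrt{T\sum_t D_t}$, so it remains to show $\sum_t D_t\leq \tfrac{3}{2\alpha}+\tfrac{P_T}{\alpha}+\tfrac{T(4+\alpha)}{2}$.

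Next, I would recast the DORA update in a form amenable to norm analysis. Combining (\ref{eq: updateA}) and (\ref{eq: updateB}) gives $\mathbf{x}_t=(1-\alpha)\mathbf{x}_{t-1}+\alpha\tilde{\mathbf{x}}_{t-1}$, where $\tilde{x}_{i,t-1}=x'_{i,t-1}$ for $i\neq s_{t-1}$ and $\tilde{x}_{s_{t-1},t-1}=1-\sum_{j\neq s_{t-1}}x'_{j,t-1}$. Monotonicity (Assumption \ref{eq: cvx}) together with the definition of the straggler gives $x'_{i,t-1}\leq x_{i,t-1}$ for each non-straggler, which verifies both that $\tilde{\mathbf{x}}_{t-1}\in\mathcal{F}$ and that the budget stays tight, so the iterates remain feasible. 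Since every feasible point satisfies $\mathbf{x}\geq 0$ and $\mathbf{1}^\top\mathbf{x}\leq 1$, any two feasible points are at squared Euclidean distance at most $2$, a crude diameter bound that I will use repeatedly.

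The recursion is then obtained by splitting $\mathbf{x}_t-\mathbf{x}_t^*=(\mathbf{x}_t-\mathbf{x}_{t-1}^*)+(\mathbf{x}_{t-1}^*-\mathbf{x}_t^*)$. Convexity of the squared norm applied to the convex combination $\mathbf{x}_t=(1-\alpha)\mathbf{x}_{t-1}+\alpha\tilde{\mathbf{x}}_{t-1}$ gives
\[
\|\mathbf{x}_t-\mathbf{x}_{t-1}^*\|_2^2\leq(1-\alpha)D_{t-1}+\alpha\|\tilde{\mathbf{x}}_{t-1}-\mathbf{x}_{t-1}^*\|_2^2\leq(1-\alpha)D_{t-1}+2\alpha.
\]
Expanding $D_t$ and bounding the cross term $2\langle\mathbf{x}_t-\mathbf{x}_{t-1}^*,\mathbf{x}_{t-1}^*-\mathbf{x}_t^*\rangle$ by the diameter times $\|\mathbf{x}_{t-1}^*-\mathbf{x}_t^*\|_2$ yields a recursion of the form $D_t\leq(1-\alpha)D_{t-1}+2\alpha+c_1\|\mathbf{x}_{t-1}^*-\mathbf{x}_t^*\|_2+\|\mathbf{x}_{t-1}^*-\mathbf{x}_t^*\|_2^2$. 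Summing from $t=1$ to $T$ makes the $D_t$ terms partially cancel, leaving $\alpha\sum_t D_t\leq D_0+2\alpha T+c_1 P_T+\sum_t\|\mathbf{x}_{t-1}^*-\mathbf{x}_t^*\|_2^2$; using $D_0\leq 2$ and dominating each squared path-step by $\sqrt{2}$ times itself yields $\sum_t D_t$ of the promised form $\tfrac{\mathrm{const}}{\alpha}+\tfrac{\mathrm{const}\cdot P_T}{\alpha}+\mathrm{const}\cdot T$.

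The main obstacle is matching the specific constants $\tfrac{3}{2\alpha}$, $\tfrac{1}{\alpha}$ on $P_T$, and $\tfrac{4+\alpha}{2}$ on $T$. The naive triangle-inequality treatment of the cross term inflates the coefficient on $P_T/\alpha$ beyond $1$, so I expect the authors apply Young's inequality with a carefully tuned parameter to split $2\langle\mathbf{x}_t-\mathbf{x}_{t-1}^*,\mathbf{x}_{t-1}^*-\mathbf{x}_t^*\rangle$ between the $D$-recursion and the path-length contribution, extracting the $\alpha$-dependent piece that shows up in $(4+\alpha)/2$. I would iterate on that Young parameter and on the diameter bound used for $\|\tilde{\mathbf{x}}_{t-1}-\mathbf{x}_{t-1}^*\|_2^2$ to lock in the stated constants; aside from that bookkeeping, the argument is essentially the one above.
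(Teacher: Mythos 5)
Your route is genuinely different from the paper's. The paper never bounds $\sum_t\|\mathbf{x}_t-\mathbf{x}_t^*\|^2$; instead it defines the pseudo-gradient $G_t$ (with $G_{i,t}=x_{i,t}-x'_{i,t}$ off the straggler and the balancing entry at $s_t$), proves the key alignment lemma $\bigl[(f_t(\mathbf{x}_t)-f_t(\mathbf{x}_t^*))/L\bigr]^2\leq 2+G_t^{\mathsf T}(\mathbf{x}_t-\mathbf{x}_t^*)$ from the structural facts $x'_{i,t}\leq x_{i,t}$, $x'_{i,t}\leq x^*_{i,t}$, $x_{s_t,t}\leq x^*_{s_t,t}$, and then runs the standard OGD dynamic-regret telescoping on $\|\mathbf{x}_t-\mathbf{x}_t^*\|^2-\|\mathbf{x}_{t+1}-\mathbf{x}_t^*\|^2$, extracting the drift term from $(\mathbf{x}_{t-1}^*-\mathbf{x}_t^*)^{\mathsf T}\mathbf{x}_t\leq\|\mathbf{x}_{t-1}^*-\mathbf{x}_t^*\|$ since $\|\mathbf{x}_t\|\leq 1$. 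You replace all of this with the Lipschitz bound $f_t(\mathbf{x}_t)-f_t(\mathbf{x}_t^*)\leq L\|\mathbf{x}_t-\mathbf{x}_t^*\|$ (correct) plus a pure contraction $\|\mathbf{x}_t-\mathbf{x}_{t-1}^*\|^2\leq(1-\alpha)D_{t-1}+\alpha\|\tilde{\mathbf{x}}_{t-1}-\mathbf{x}_{t-1}^*\|^2\leq(1-\alpha)D_{t-1}+2\alpha$, which uses only feasibility of $\tilde{\mathbf{x}}_{t-1}$ and the diameter of the simplex, and no property whatsoever of DORA's direction relative to $\mathbf{x}_{t-1}^*$. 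That is both a simplification and a tell: your bound holds verbatim for ``move a fraction $\alpha$ toward any feasible point,'' and its leading term $2T$ inside the radical just reproduces the trivial $\sqrt{2}\,LT$ diameter bound (as, admittedly, does the theorem's own $T(4+\alpha)/2$ term).

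The genuine gap is the constants, and it is not mere bookkeeping. Carrying your recursion through gives roughly $\sum_t D_t\leq \frac{2}{\alpha}+2T+\frac{3\sqrt{2}\,P_T}{\alpha}$: the cross term $2\langle\mathbf{x}_t-\mathbf{x}_{t-1}^*,\,\mathbf{x}_{t-1}^*-\mathbf{x}_t^*\rangle$ unavoidably carries the factor $\|\mathbf{x}_t-\mathbf{x}_{t-1}^*\|\leq\sqrt{2}$, so the coefficient you obtain on $P_T/\alpha$ is about $3\sqrt{2}$, not $1$, and tuning Young's inequality only trades this against the contraction factor and the $T$ coefficient --- it does not drive it to $1$. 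So you prove a bound of the same form and order, but not the stated inequality, and your closing claim that the constants can be ``locked in'' is unsupported. The theorem's coefficient $1$ on $P_T/\alpha$ comes specifically from telescoping $\|\mathbf{x}_t-\mathbf{x}_t^*\|^2-\|\mathbf{x}_{t+1}-\mathbf{x}_t^*\|^2$ so that the comparator drift multiplies $\mathbf{x}_t$ alone (norm at most $1$) rather than a difference of two feasible points; to get that telescoping you need the update direction to appear as an inner product with $\mathbf{x}_t-\mathbf{x}_t^*$, which is exactly what the paper's Lemma \ref{lemma:gG} supplies and what your argument omits. One smaller point to make explicit if you pursue your route: the identity $\mathbf{x}_t=(1-\alpha)\mathbf{x}_{t-1}+\alpha\tilde{\mathbf{x}}_{t-1}$ and the feasibility of $\tilde{\mathbf{x}}_{t-1}$ both rely on the budget constraint holding with equality at every round, which you should verify as an invariant of the initialization and of (\ref{eq: updateB}).
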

	\begin{proof}
		Please refer to Appendix B.
	\end{proof}

\section{Application to Distributed Learning  in Mobile Edge Computing}\label{ch: performance}

To complement the theoretical findings in the previous section, we numerically study the performance of \texttt{DORA}, for an application to online resource allocation in distributed edge learning. 
Edge learning benefits from geographic proximity to where data are generated and processed. Recent surveys on distributed learning over the wireless edge can be found in~\cite{AG20} and \cite{ND20}. 
Here, we focus on online bandwidth allocation for the communication between a set of agents and a parameter server, to reduce the training latency. 

\subsection{Online Min-Max Resource Allocation for Edge Learning Using \texttt{DORA}}

We consider a parameter server-based distributed learning framework, as shown in Fig.~\ref{fig:system}, where an edge server is responsible for resource allocation and necessary coordination. 
Multiple agents $i\in\mathcal{N}$ train the same learning model in parallel with their local training datasets, while the edge server also functions as the parameter server.

The computation task associated with agent $i$ in round $t$ has a data size of $d_{i,t}$.
For spectrum sharing among agents, we assume orthogonal frequency division. 
Therefore, the wireless communication delay for agent $i$ is given by $f_{i,t}^{\text{C}}=d_{i,t}/(x_{i,t} B\log(1+\frac{h_{i,t}^2 p_{i,t}}{\sigma_n^2}))$, where we have used the Shannon bound for the data rate, $x_{i,t}$ is the fraction of bandwidth allocated to agent $i$, $B$ is the total available bandwidth, $h_{i,t}^2$ is the channel power gain, $p_{i,t}$ is the transmit power, and $\sigma_n^2$ is the white noise power. 

The total delay due to agent $i$ consists of both the communication delay and the processing delay, i.e.,
\begin{equation}\nonumber
	\begin{aligned}
		f_{i,t}(x_{i,t}) = f_{i,t}^{\text{C}}+f_{i,t}^{\text{P}},
	\end{aligned}
\end{equation}
where $f_{i,t}^{\text{P}}$ is the time required for agent $i$ to process its task in round $t$. 
Note that $f_{i,t}^{\text{P}}$ depends on the computation intensity of the task and the available processing capacity of the agent, both of which can be time-varying and unpredictable. 
We do not require knowledge of $f_{i,t}^{\text{P}}$ at the beginning of round $t$.

\begin{figure}
	\centering
	\includegraphics[width=0.6\linewidth]{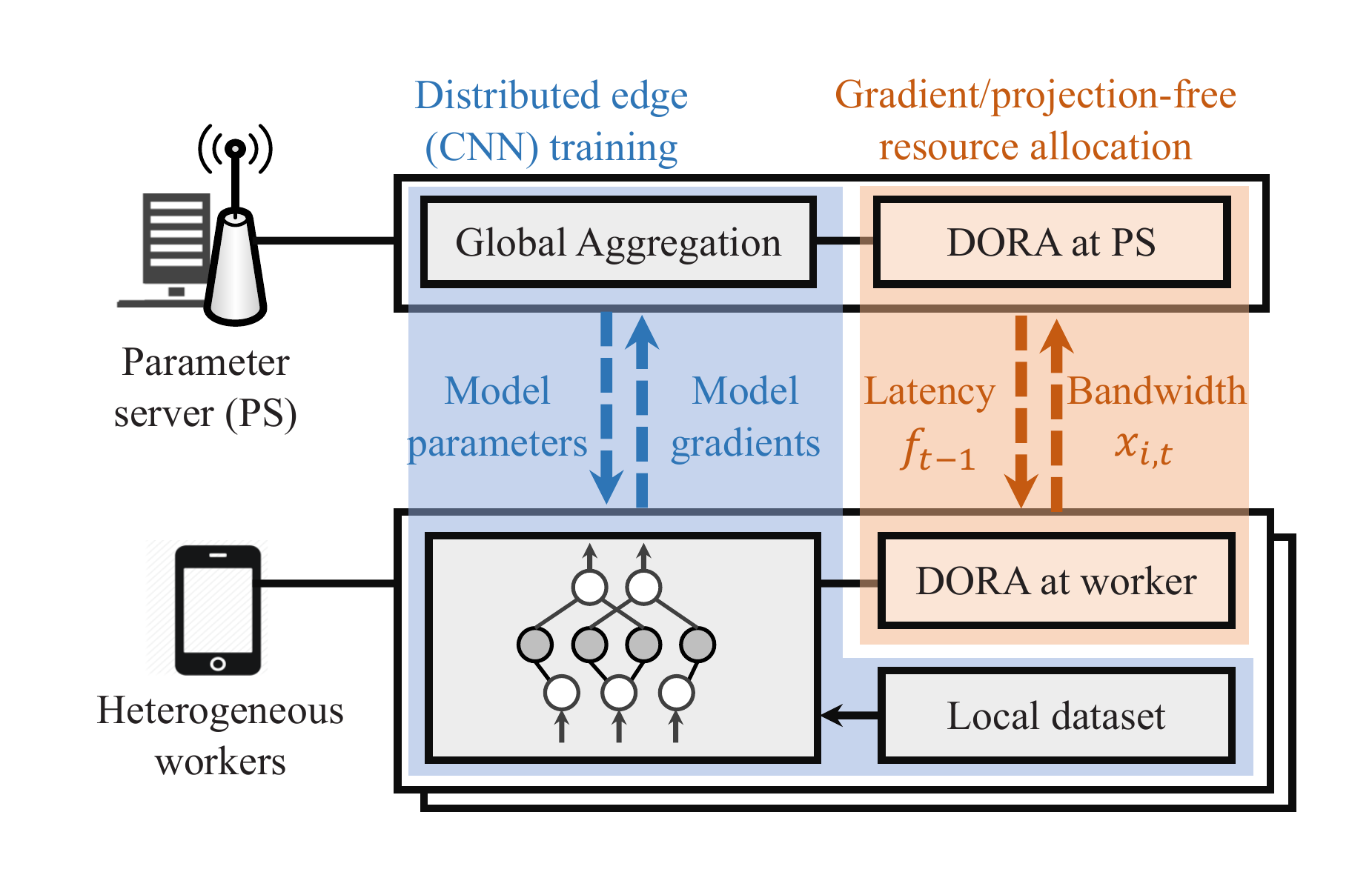}
	\caption{\texttt{DORA} for distributed learning at wireless edge.}
	\label{fig:system}
\end{figure}

We consider standard synchronous distributed learning, where the parameter server updates the models only after aggregating all local gradients. 
Therefore, the latency of each round depends on the agent with the highest latency.
We are interested in minimizing the maximum latency per round over time. 
Therefore, we define the cost function $f_t(\mathbf{x}_t)$ as in (\ref{eq: max}).
Then we can formulate resource allocation in distributed edge learning as an online min-max optimization problem as in (\ref{eq: minmax}).

As illustrated in Fig.~\ref{fig:system}, using convolutional neural network (CNN) training as an example, the resource allocation scheme provided by \texttt{DORA} can be integrated into the distributed edge learning process as follows.
Set the round index $t$ of \texttt{DORA} to be the same as that of CNN training.
After each agent calculates the local gradient of CNN in round $t-1$, the processing time $f_{i,t-1}^{\text{P}}$  is observed by the agent.
After each agent sends its local gradient to the parameter server in round $t-1$, the communication time $f_{i,t-1}^{\text{C}}$, data size $d_{i,t}$, and channel condition are observed by the agent.
That is to say that the form of the latency function $f_{i,t-1}$ becomes known to the agent at the beginning of round $t$ (line \ref{alg: observe} in Algorithm~\ref{alg:online_alg}).
After aggregating the local gradients in round $t-1$, the parameter server broadcasts the updated CNN model as well as the latency $f_{t-1}$ in the previous round to all agents.
Then each agent determines the bandwidth usage in round $t$ based on the feedback from the parameter server.
Afterward, the parameter server re-allocates the bandwidth among all agents.

\subsection{Experimental Performance Evaluation}

We consider a distributed learning scenario where five agents train the LeNet model in parallel, using the MNIST database~(\cite{LB98}).
Our learning system is implemented with the distributed package (i.e., torch.distributed) in PyTorch.
The hardware used for the experiments features a 2.9 GHz Intel Core i5 processor and 8 GB of memory. 

LeNet is a CNN composed of two convolutional layers, followed by two fully-connected layers and a softmax classifier. 
In total, we have 61,706 trainable parameters with a size of $d_{i,t}=0.35$ MB, $\forall i,t$. 
We train LeNet with the cross-entropy loss and the Adam optimizer. 
The learning rate of LeNet model training is set to 0.001.
The MNIST database has a training set of 60,000 handwritten digits images, each of which contains 28 x 28 greyscale pixels.
It is equally partitioned into five local training datasets, each assigned to an agent with a batch size of 256.
Therefore, we have $\lceil\frac{60,000}{5\times 256}\rceil=47$ rounds in each training epoch.
We set the number of training epochs to $10$, i.e., $470$ rounds.
We measure and use the actual processing time $f_{i,t}^{\text{P}}$ for training at each agent in each round on the processor. 

The parameter server is placed at the center of a 500 m $\times$ 500 m area.
The agent mobility model follows the random waypoint model where the initial agent velocity is chosen uniformly at random in the interval $[0.8v,1.2v]$.
The bandwidth resource budget $B$ is $20$ MHz.
The channel power gain $h_{i,t}^2$ is generated as  $h_0 (\frac{D_0}{D_i})^n$, where $h_0 = - 40$ dB is the path-loss constant, $D_0= 1$m is the reference distance, $D_i$ is the distance between agent $i$ and the parameter server, and $n=4$ is the path-loss component~(\cite{MZ16}).
The transmission power $p_{i,t}=1$ W, $\forall i,t$.
The noise density is $-174$ dBm/Hz.
We compare the performance of \texttt{DORA} with that of the following gradient- and/or projection-based benchmarks:

\begin{itemize}
	\item \textbf{EQUAL}: Resource is equally shared among all edge devices. This is frequently assumed in the analysis of distributed training.
	\item \textbf{OGD-OMM}~(\cite{BC19}): 
	Resource allocation is updated with gradient descent, i.e., $\mathbf{x}_{t+1}\leftarrow \pi_\mathcal{F}\left(\mathbf{x}_{t}-\alpha \mathbf{g}_t\right)$,
	where $\mathbf{g}_t$ can be a subgradient when the gradient does not exist, and $\pi_\mathcal{F}(\cdot)$ denotes the Euclidean projection onto $\mathcal{F}$.
	Projection onto the simplex is implemented using the method in (\cite{MA14}).

	\item \textbf{OMD}~(\cite{SJ17}): 
	Resource allocation is updated with mirror descent, i.e., $\mathbf{x}_{t+1}\leftarrow \arg\min\left\{\langle \mathbf{x},g_t\rangle+\alpha D_\psi(\mathbf{x},\mathbf{x}_t)\right\}$, where $D_\psi(\mathbf{x}, \mathbf{y})=\psi(\mathbf{x}) - \psi(\mathbf{y}) - \langle \mathbf{x} - \mathbf{y}, \nabla\psi(\mathbf{y})\rangle$ is the Bregman divergence corresponding to a strongly convex regularization function $\psi$. 
	Here we consider the Kullback-Leibler (KL) divergence, which is generated by $\psi(\mathbf{x})=\sum_{i\in\mathcal{N}} x_i \log x_i$.
	\item \textbf{FKM}~(\cite{FL05}): Resource allocation is updated with estimated gradient descent: perturb direction with $\mathbf{v}_{t}\leftarrow \mathbf{x}_t+\delta_t \mathbf{u}_t$, where $\delta_t$ is a step size and $\mathbf{u}_t$ is a randomly selected unit vector; estimate gradient with $\hat{\mathbf{g}}_{t}\leftarrow \frac{N}{\delta_t}f_t(\mathbf{v}_{t})\mathbf{u}_t$; and then update decisions with $\mathbf{x}_{t+1}\leftarrow \pi_{\mathcal{F}}\left(\mathbf{x}_{t}-\alpha \hat{\mathbf{g}}_t\right)$. 
	
	\item \textbf{OCG}~(\cite{ZZ17}): Resource allocation is updated with conditional descent, i.e.,
	$\mathbf{v}_{t}\leftarrow \arg\min_{x\in\mathcal{F}}\langle \mathbf{x},\partial F_t(\mathbf{x}_{t})\rangle$ and  $\mathbf{x}_{t+1}\leftarrow \mathbf{x}_{t}+ \frac{1}{t}\left(\mathbf{v}_{t}-\mathbf{x}_{t}\right)$, where $\partial F_{t}(\mathbf{x}_{t}) = \sum_\tau^{t}\partial f_\tau(\mathbf{x}_{\tau})$ is the aggregated gradient.

	\item \textbf{Dynamic OPT}: Ideally, we assume a priori knowledge of all system variables, and we solve the instantaneous optimization problem in each round through Sequential Quadratic Programming. 
	This is also the comparator in the definition of dynamic regret.
\end{itemize}

All algorithms are initialized with equal bandwidth allocation.
We use a fixed step size $\alpha = 0.02$ for resource allocation schemes \texttt{OGD-OMM}, \texttt{OMD}, and \texttt{DORA}.

Figs.~\ref{fig: d_latency} and \ref{fig: d_regret} compare the performance of various algorithms over the number of rounds $T$ in terms of per-round latency and the dynamic regret, respectively. 
Here we set the velocity of users as $v=0$.
Note that, besides serving as the optimum baseline, \texttt{Dynamic OPT} also provides an indication on how fast the environment fluctuates.
We observe that \texttt{DORA} tracks closely to \texttt{Dynamic OPT} after the first few rounds and has the least dynamic regret.
\texttt{EQUAL} incurs the worst latency since it ignores the heterogeneity among agents as well as the time-varying cost.
\texttt{FKM}, \texttt{OMD}, \texttt{OGD-OMM}, and \texttt{OCG} gradually track close to \texttt{Dynamic OPT} as time goes on, but they require many more rounds than \texttt{DORA}.
\texttt{OGD-OMM} and \texttt{OMD} adjust their decisions only using parameters relevant to the historical communication delay since the gradient of function $f_{i,t}$ only depends on the communication parameters.
\texttt{FKM} converges more slowly than \texttt{OGD-OMM} since it depends on the accuracy of gradient estimation.
For \texttt{OCG}, the decision is updated through a trade-off between the current decision and the output of the linear optimization, which allocates all resource to the straggler given the resource constraints.
Therefore, non-stragglers in the current round could become stragglers in the next round, thus leading to frequent latency spikes.
In contrast, in \texttt{DORA}, the non-stragglers relinquish some resource to the agent who is the straggler in the previous round, so it updates its decisions by jointly considering the historical computation and communication latency.
As shown in Fig.~\ref{fig: d_latency}, by round $100$, \texttt{DORA} has reduced the per-round latency by 66\%, 65\%, 60\%, 45\%, and 27\%, respectively, when compared with \texttt{EQUAL}, \texttt{FKM}, \texttt{OGD-OMM}, \texttt{OMD}, and \texttt{OCG}.

\begin{figure*}[t]
	\centering
	\begin{minipage}[b]{.45\textwidth}
		\centering
 		\includegraphics[width=0.8\linewidth]{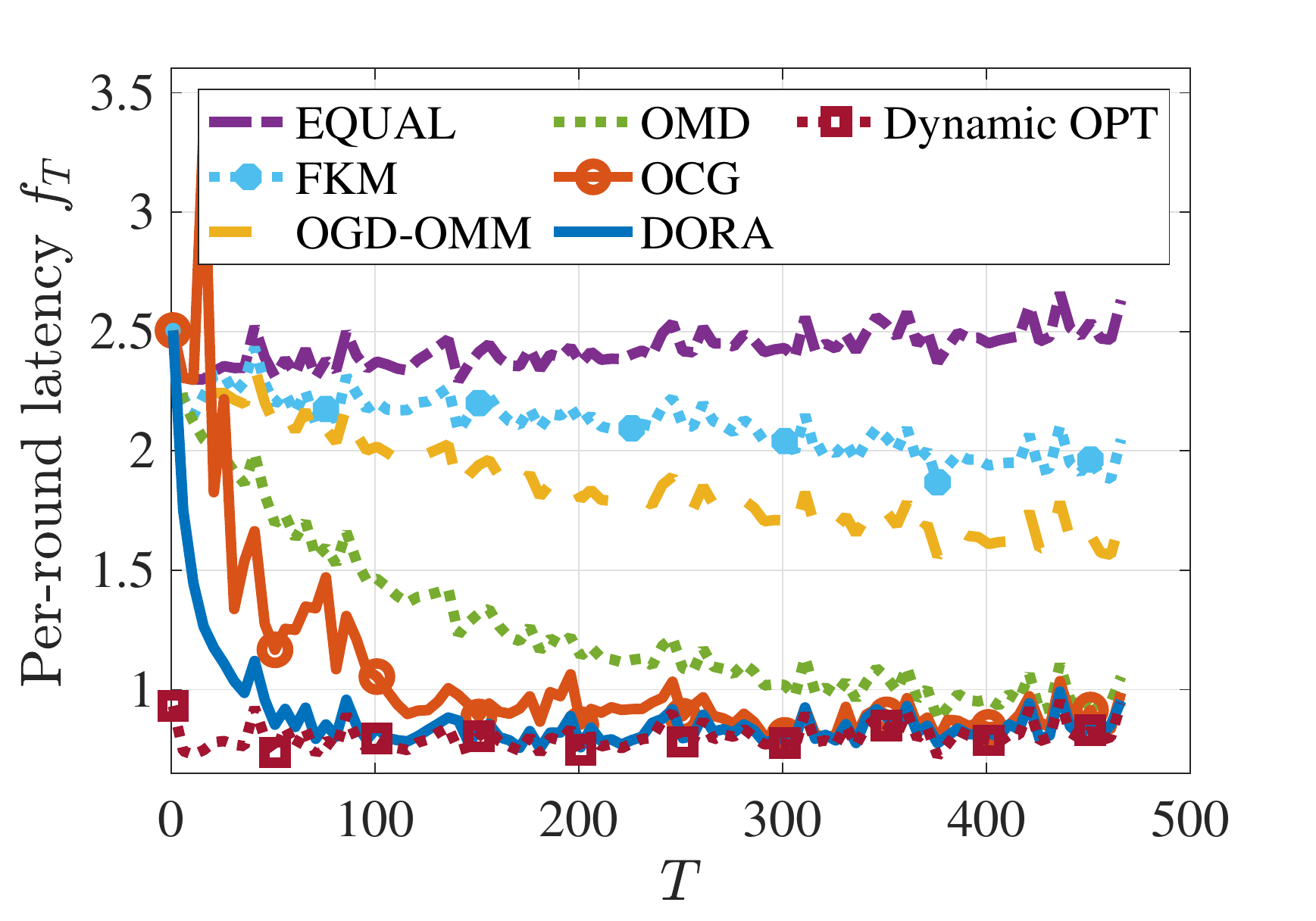}
		\caption{Per-round delay.}
		\label{fig: d_latency}
	\end{minipage}
	\begin{minipage}[b]{.45\textwidth}
		\centering
		\includegraphics[width=0.8\linewidth]{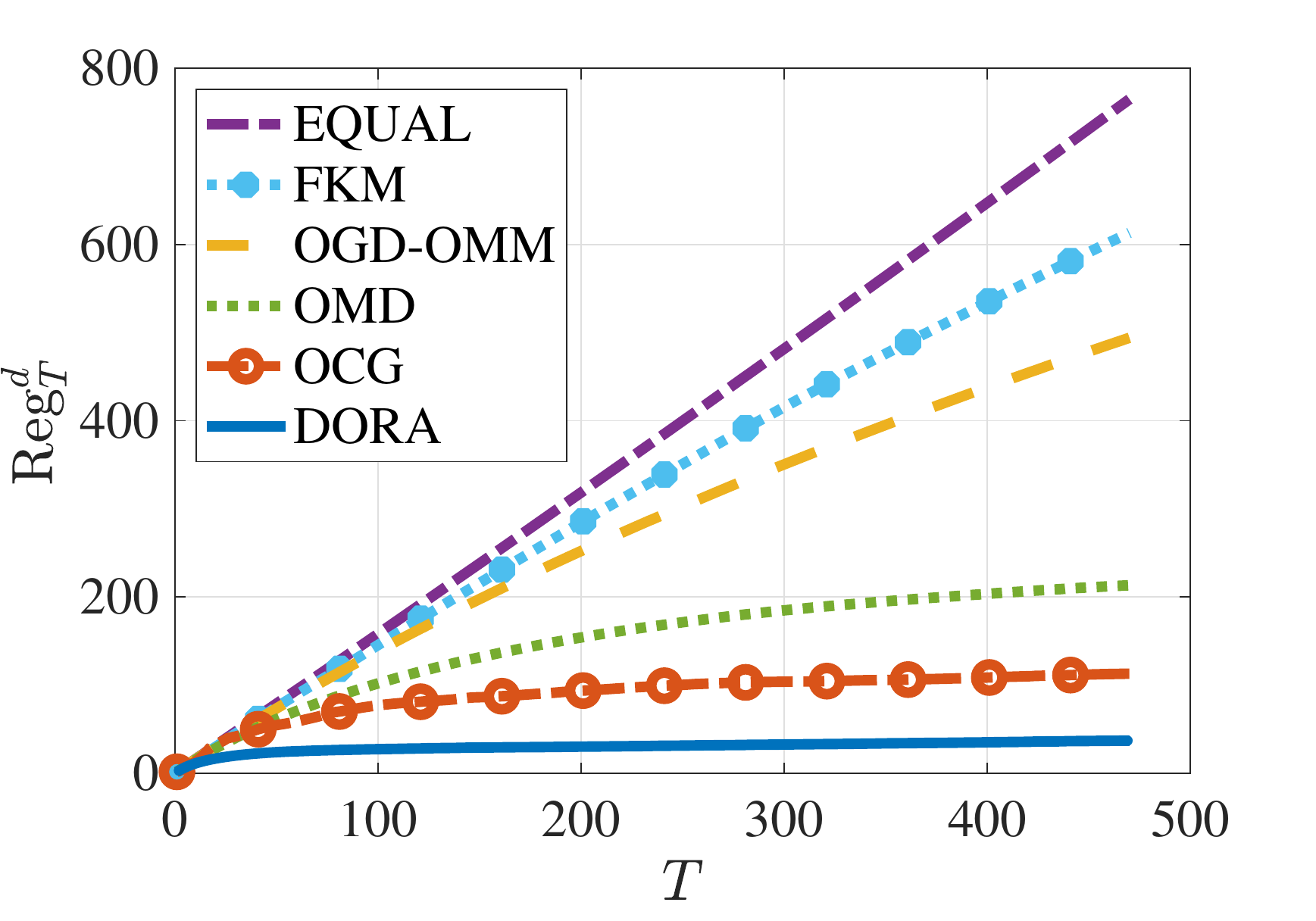}
		\caption{Dynamic regret.}
		\label{fig: d_regret}
	\end{minipage}
\end{figure*}  

\begin{figure*}[t]
	\centering
	
	\begin{minipage}[b]{.45\textwidth}
		\centering
		\includegraphics[width=0.8\linewidth]{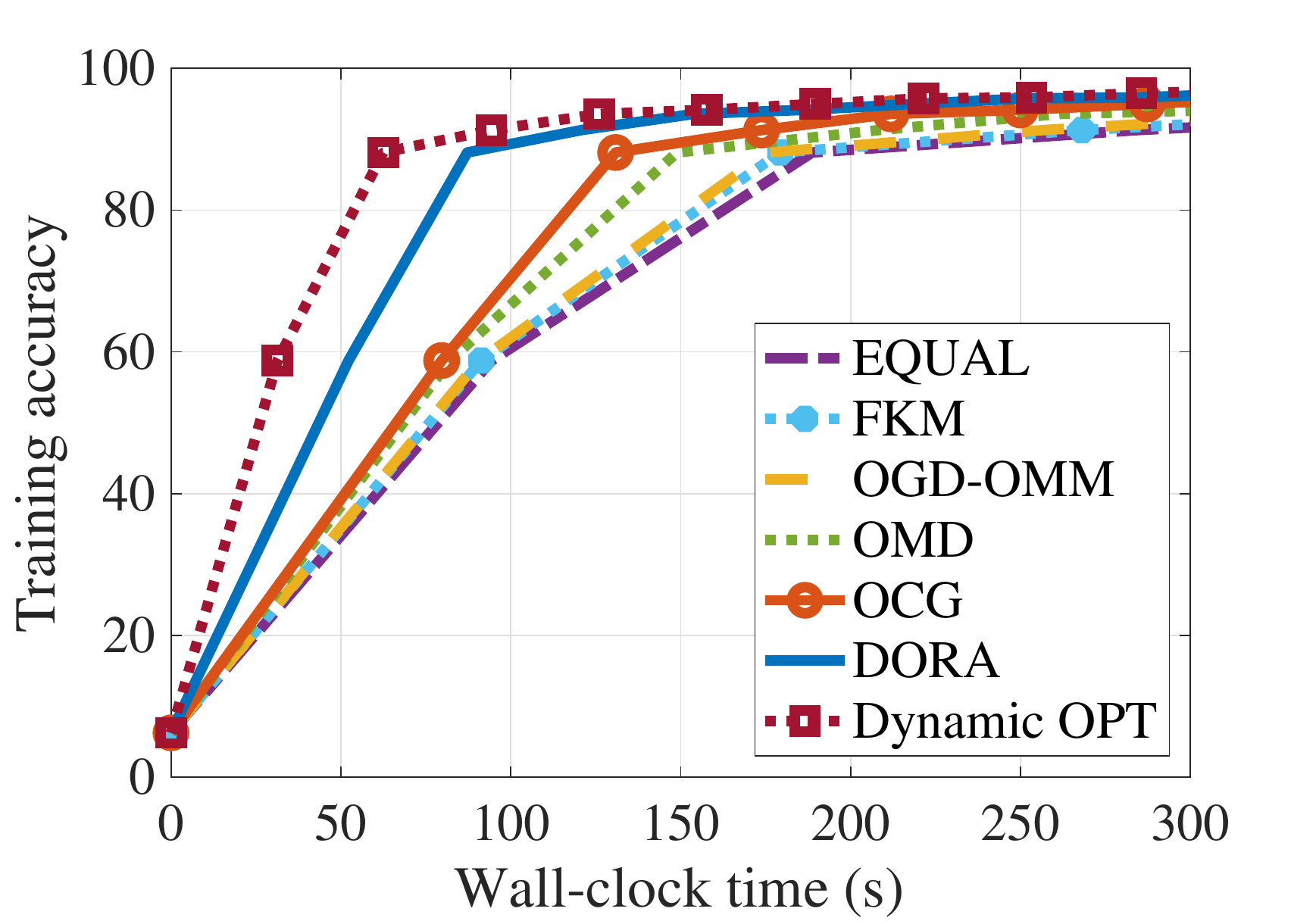}
		\caption{Training accuracy.}
		\label{fig: training}
	\end{minipage}%
	\begin{minipage}[b]{.45\textwidth}
		\centering
		\includegraphics[width=0.8\linewidth]{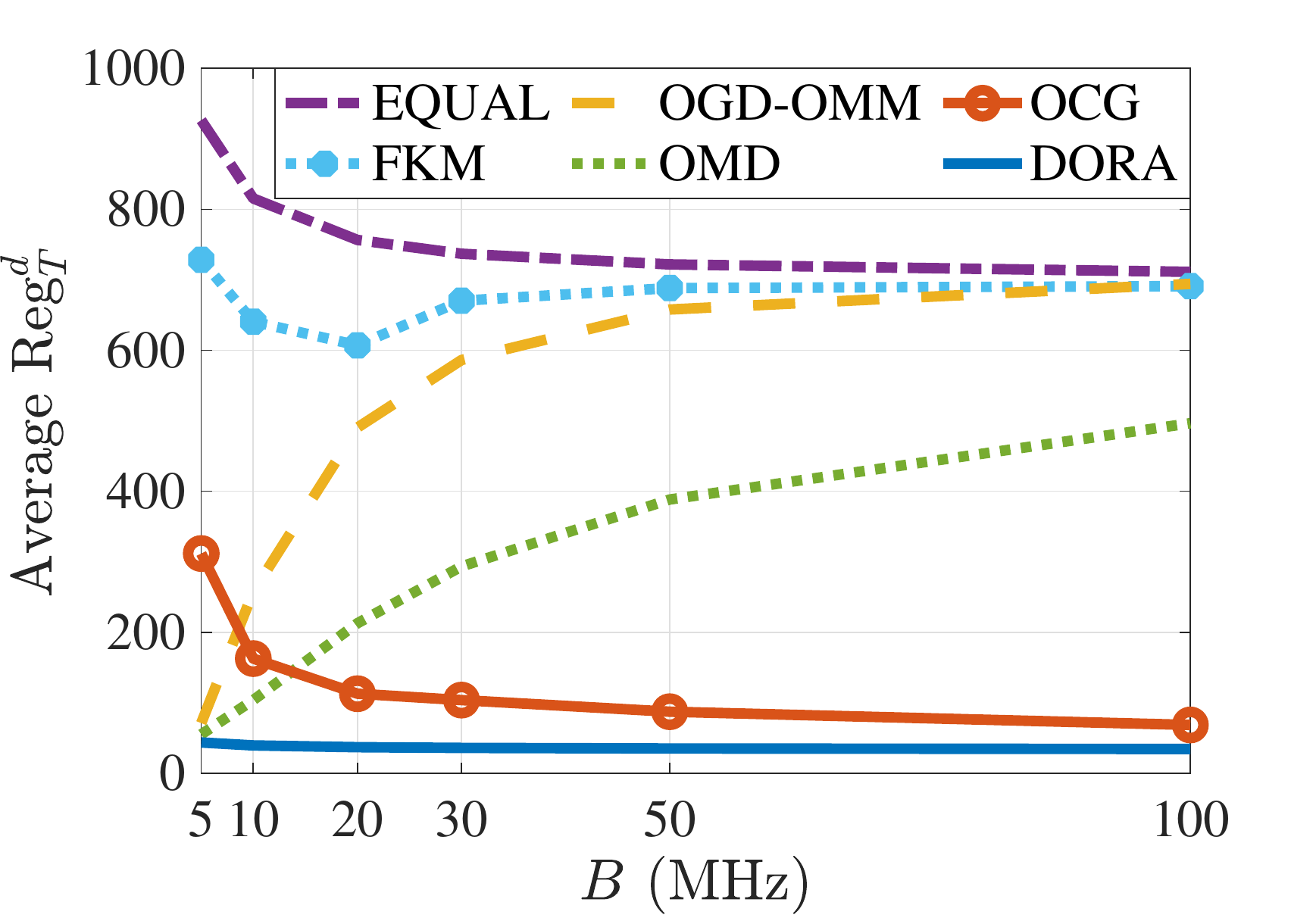}
		\caption{Average regret over bandwidth.}
		\label{fig:c2cratio}
	\end{minipage}
\end{figure*}  

\begin{figure*}[t]
	\centering
	\begin{minipage}[b]{.45\textwidth}
		\centering
		\includegraphics[width=0.8\linewidth]{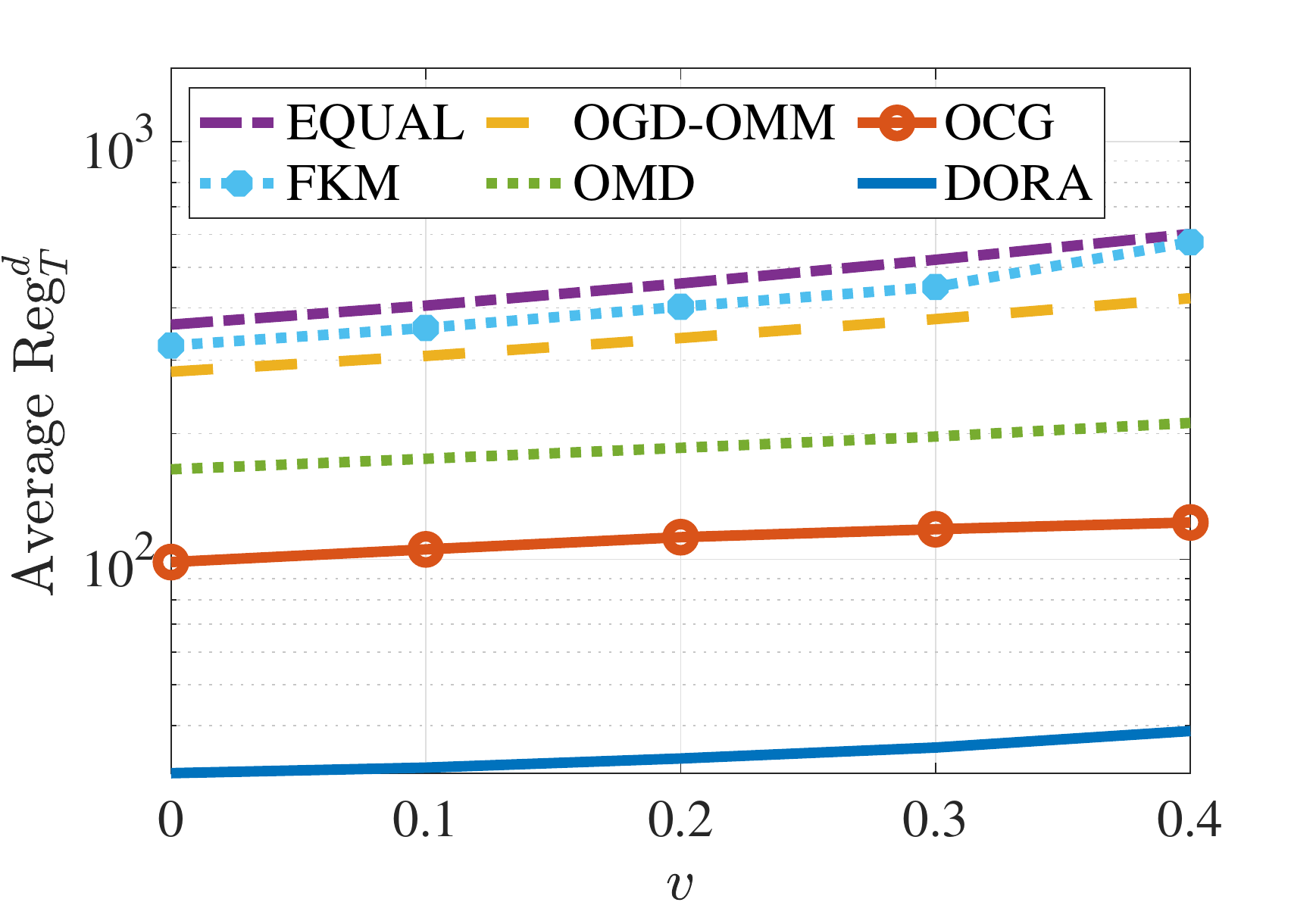}
		\caption{Average regret over agent velocity.}
		\label{fig:randomness}
	\end{minipage}
	\begin{minipage}[b]{.45\textwidth}
		\centering
		\includegraphics[width=0.8\linewidth]{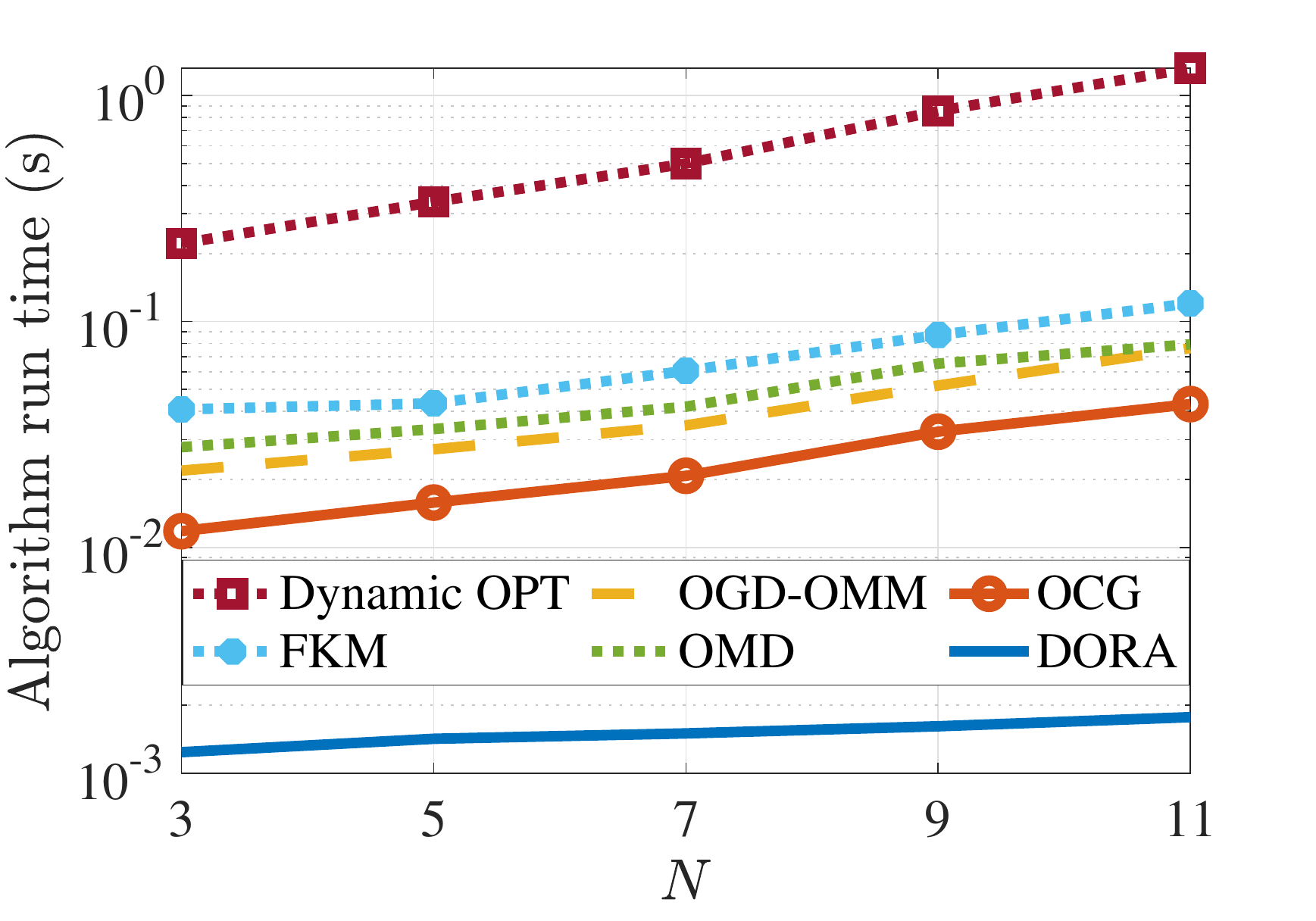}
		\caption{Algorithm run time over $N$.}
		\label{fig:algrunning}
	\end{minipage}
	
\end{figure*}

Fig.~\ref{fig: training} shows the training accuracy of LeNet versus the training (wall-clock) time.
Recall that the LeNet model is updated only after all agents send their local gradients to the parameter server. 
The training time consists of both the computation time for LeNet's forward and backward propagation and the communication time for parameter synchronization, which accounts for both the number of rounds and the latency per round. 
We observe that \texttt{DORA} has the fastest training time since it responds quickly towards the dynamic optimum.
For 90\% training accuracy, \texttt{DORA} speeds up the training time by 53.7\%, 51.4\%, 50.3\%, 41.2\%, and 34\%, respectively, when compared with \texttt{EQUAL}, \texttt{FKM}, \texttt{OGD-OMM},  \texttt{OMD}, and \texttt{OCG}.

We further investigate the performance of different algorithms under various bandwidth budgets.
More bandwidth indicates a network with a higher communication speed.
To measure the converged performance of resource allocation algorithms, we use the average of $\textnormal{Reg}_T^d$ over $460 \leq T \leq 470$ to approximate the average regret.
Fig.~\ref{fig:c2cratio} shows the impact of the bandwidth budget on the average regret.
For gradient-based algorithms, the average regret increases since the magnitude of the gradient decreases with increasing allocated bandwidth.
This indicates that they converge more slowly in higher-speed networks.
It is worth noting that \texttt{FKM} does not work well with small $B$.
This is because the gradient estimation with a unit vector for small $B$ becomes inaccurate.
Moreover, since \texttt{OCG} blindly shows partiality to the straggler, it is likely to introduce frequent latency spikes in lower-speed networks.
In contrast, \texttt{DORA} updates its decisions by jointly considering the latency in the previous round as well as the heterogeneity among agents, ensuring robust performance under different bandwidth budgets.

Fig.~\ref{fig:randomness} shows the impact of agent velocity $v$, which represents the randomness of the communication link, on the average regret.
A large $v$ indicates more radical changes in the wireless channels, and thus more radical changes in the local cost functions over time.
The average regret of all algorithms increases with increasing $v$.
This is in line with the theoretical result that a larger path-length $P_T$ of the instantaneous minimizers leads to a larger regret.
In particular, with larger $v$, the previous feedback becomes less useful as a reference and the dynamics in communication delay gradually overweigh the computation heterogeneity among the agents.
In a steady environment, i.e., $v = 0$, \texttt{DORA} reduces the average regret by 91.8\%, 90.7\%, 89.2\%, 81.3\%, and 69\%, respectively, when compared with \texttt{EQUAL}, \texttt{FKM}, \texttt{OGD-OMM}, \texttt{OMD}, and \texttt{OCG}.

To demonstrate the computation complexity of \texttt{DORA} and the other algorithms, we measure their run time (in seconds) for calculating the resource allocation solutions only, i.e., it excludes the LeNet training time.
Since \texttt{EQUAL} is deterministic, its run time is zero and is ignored in this comparison.
Fig.~\ref{fig:algrunning} shows the total run time over the horizon of $T=470$ rounds, for a different number of agents $N$.
With increasing $N$, the run time of \texttt{Dynamic OPT} and \texttt{OCG} increases dramatically since solving an instantaneous optimization with the standard optimization package includes multiple inner loops.
\texttt{FKM}, \texttt{OGD-OMM}, and \texttt{OMD} are also sensitive to $N$ due to their projection operation.
In contrast, the algorithm run time of \texttt{DORA} is negligible.
\texttt{DORA} is robust and light-weight due to its gradient-free and projection-free properties.

\section{Conclusion}\label{sec: conclusion}
We have proposed a new distributed online algorithm termed \texttt{DORA} to solve an online min-max optimization problem in a multi-agent system with coupling linear constraints.
\texttt{DORA} has high computation efficiency, since it is gradient-free and projection-free.
We analyze the dynamic regret of \texttt{DORA} for non-convex functions.
We applied the proposed solution to resource allocation in distributed online machine learning.
Numerical studies show the efficacy of the proposed solution in terms of significantly faster convergence and reduced training time.

	\acks{This work has been funded by the Natural Sciences and Engineering Research Council (NSERC) of Canada.}

\bibliography{DORA}

\appendix
\section{Proof of Lemma \ref{lemma: 1}}\label{proof: l1}
	\begin{proof}
	By the definition of $\eta_t^*$, we have $\eta_t \geq \eta_t^*$, i.e., $f_{i,t}(x'_{i,t})\geq f_{i,t}(x_{i,t}^*)$.
	As $f_{i,t}(x_{i,t})$ is a decreasing function over $x_{i,t}$, we have $x'_{i,t}\leq x_{i,t}^*$.
	With the KKT condition, at optimality of minimizing (\ref{eq: lagrangian}), we have
	\begin{equation}\nonumber
		\begin{aligned}
			0 \in \frac{\partial 	L_t(\eta_t^*, \mathbf{x}_t, \pi_t^*)}{\partial x_{i,t}}|_{x_{i,t}^*} 
			=r_{i,t}\partial f_{i,t}(x_{i,t}^*)+\pi_t^*, ~\forall i\in\mathcal{N}, 
		\end{aligned}
	\end{equation}
	where $\partial f_{i,t}(x_{i,t}^*)$ is the set of subgradients of $f_{i,t}$ at $x_{i,t}^*$.
	Let us consider the subgradient $\hat g_{i,t}(x_{i,t}^*)\in\partial f_{i,t}(x_{i,t}^*)$ such that $r_{i,t}\hat g_{i,t}(x_{i,t}^*)+\pi_t^*
	= 0$.
	
	We take the derivative of the local Lagrangian $L_{i,t}(x_{i,t})$:
	\begin{align}\nonumber
		\frac{\partial 	L_{i,t}(x_{i,t})}{\partial x_{i,t}}
		=    
		\begin{cases}
			r_{i,t}\partial f_{i,t}(x_{i,t})+\pi_t^*, & \text{if $x_{i,t}<x'_{i,t}$},\\
			\pi_t^*, & \text{o.w.}
		\end{cases} 
	\end{align}
	where $\partial f_{i,t}(x_{i,t})$ is the set of subgradients of $f_{i,t}$ at $x_{i,t}$.
	Since $f_{i,t}$ is convex and decreasing, we have
	\begin{equation}\nonumber
		\begin{aligned}
			g_{i,t}(x_{i,t})\leq g_{i,t}(x'_{i,t})\leq \hat g_{i,t}(x^*_{i,t})\leq0, \forall x_{i,t}\leq x'_{i,t} \leq x^*_{i,t},
		\end{aligned}
	\end{equation}
	where $g_{i,t}(x_{i,t})\in\partial f_{i,t}(x_{i,t})$ and $ g_{i,t}(x_{i,t}')\in\partial f_{i,t}(x_{i,t}')$.
	We further have 
	\begin{equation}\nonumber
		\begin{aligned}
			r_{i,t}g_{i,t}(x_{i,t})+\pi_t^*\leq r_{i,t}\hat g_{i,t}(x^*_{i,t})+\pi_t^*=0.
		\end{aligned}
	\end{equation}
	This implies that $L_{i,t}(x_{i,t})$ is decreasing when $x_{i,t}<x'_{i,t}$. 
	Furthermore, it is increasing when $x_{i,t}>x'_{i,t}$, since $\frac{\partial 	L_{i,t}(x_{i,t})}{\partial x_{i,t}}=\pi_t^*\geq 0$. 
	Therefore, $x'_{i,t}$ is a minimizer of the local Lagrangian $L_{i,t}(x_{i,t})$.
\end{proof}

\section{Proof of Theorem \ref{Thm: regret}} \label{Appendix: proof}
\begin{proof}
Let $G_t$ denote a vector whose elements are
\begin{equation}\nonumber
	\begin{aligned}
		G_{i,t}=    
		\begin{cases}
			x_{i,t}-x'_{i,t}, & \text{if $i\neq s_t$},\\
			-\sum_{j\neq s_t}(x_{j,t}-x'_{j,t}), & \text{o.w.}
		\end{cases} 
	\end{aligned}
\end{equation} 
Then, the updating rule in (\ref{eq: updateA}) -- (\ref{eq: updateB}) can be summarized as 
\begin{equation}\label{eq: rule}
	\begin{aligned}
		\mathbf{x}_{t+1}\leftarrow \mathbf{x}_{t}-\alpha  G_t.
	\end{aligned}
\end{equation}
It is easy to see that (\ref{eq:sum}) and (\ref{eq:x_i}) imply $||\mathbf{x}_t||^2\leq 1$, so we also have $||G_t||^2\leq 1$.
The intuition to bound the dynamic regret is to find a relationship between $G_t$ and gradient $g_t(\mathbf{x}_t)$.
However, we first require the following two lemmas.

\begin{lemma}\label{lemm:property}
	Any instantaneous feasible solution $\mathbf{x}_t\in\mathcal{F}$ of the online min-max problem defined in (\ref{eq: minmax}) has the following properties:
	\begin{enumerate}
		\item $x_{s_t,t}\leq x^*_{s_t,t}$; \label{lemma: 1_1} 
		\item $x'_{i,t}\leq x_{i,t},\forall i\in\mathcal{N}$;\label{lemma: 1_2}
		\item $\sum_{i\in\mathcal{N}}x'_{i,t}\leq 1$;\label{lemma: 1_3}
		\item $x'_{i,t}\leq x^*_{i,t},\forall i\in\mathcal{N}$; \label{lemma: x_prime}
		\item $\sum_{i\neq s_t}(x_{i,t}-x_{i,t}') (x_{i,t}-x_{i,t}^*)\geq -2$, \label{lemma: multiplication}
	\end{enumerate}
	where $s_t=\arg\max_{i\in\mathcal{N}}f_{i,t}(x_{i,t})$ denotes the agent who is the straggler in round $t$.
\end{lemma}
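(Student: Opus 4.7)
The plan is to prove the five properties in a logical order that differs from their statement order, because property \ref{lemma: 1_2} is the cornerstone, properties \ref{lemma: 1_3} and \ref{lemma: x_prime} follow easily, and property \ref{lemma: 1_1} will be deduced as a special case of property \ref{lemma: x_prime} applied to the straggler. The main effort will be in property \ref{lemma: multiplication}, which is where the feasibility constraint must be used twice.

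First, I would establish property \ref{lemma: 1_2}. By the definition of $x'_{i,t}$ in (\ref{eq: x'}) and of the straggler, $f_{i,t}(x'_{i,t}) = \eta_t = \max_{j\in\mathcal{N}} f_{j,t}(x_{j,t}) \geq f_{i,t}(x_{i,t})$. Invoking Assumption~\ref{eq: cvx} (monotonicity of $f_{i,t}$), this implies $x'_{i,t} \leq x_{i,t}$. Property \ref{lemma: 1_3} then follows immediately by summation, since $\sum_{i} x'_{i,t} \leq \sum_{i} x_{i,t} \leq 1$ by feasibility (\ref{eq:sum}). For property \ref{lemma: x_prime}, I would use the optimality of $\mathbf{x}_t^*$: since $\mathbf{x}_t \in \mathcal{F}$ is feasible, $f_{i,t}(x^*_{i,t}) \leq \max_j f_{j,t}(x^*_{j,t}) \leq \max_j f_{j,t}(x_{j,t}) = f_{i,t}(x'_{i,t})$, and monotonicity yields $x'_{i,t} \leq x^*_{i,t}$. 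Property \ref{lemma: 1_1} now drops out by noting that the straggler itself satisfies $f_{s_t,t}(x_{s_t,t}) = \eta_t$, so $x_{s_t,t}$ is a valid choice for $x'_{s_t,t}$; choosing $x'_{s_t,t} = x_{s_t,t}$ and applying property \ref{lemma: x_prime} gives $x_{s_t,t} \leq x^*_{s_t,t}$.

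The main obstacle is property \ref{lemma: multiplication}, which lower-bounds a signed sum. My approach is to introduce $a_i = x_{i,t} - x'_{i,t}$ and $b_i = x_{i,t} - x^*_{i,t}$. Property \ref{lemma: 1_2} gives $a_i \geq 0$, and property \ref{lemma: x_prime} tells us that whenever $b_i < 0$ the gap $|b_i| = x^*_{i,t} - x_{i,t}$ is non-negative. For indices with $b_i \geq 0$ the contribution $a_i b_i$ is already non-negative, so only the set $\mathcal{I}^- = \{i \neq s_t : b_i < 0\}$ matters. The key estimate is $a_i \leq x_{i,t} \leq 1$, which lets me drop $a_i$ at a cost of a factor $1$: $a_i b_i \geq -a_i |b_i| \geq -|b_i|$. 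Summing and bounding $|b_i| \leq x_{i,t} + x^*_{i,t}$, then using the two budget constraints $\sum_i x_{i,t} \leq 1$ and $\sum_i x^*_{i,t} \leq 1$ (both from (\ref{eq:sum})), I obtain
\begin{equation}\nonumber
\sum_{i \neq s_t}(x_{i,t}-x'_{i,t})(x_{i,t}-x^*_{i,t}) \;\geq\; -\sum_{i \in \mathcal{I}^-} |b_i| \;\geq\; -\sum_{i\neq s_t}(x_{i,t}+x^*_{i,t}) \;\geq\; -2.
\end{equation}

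The only delicate point I anticipate is the choice of $x'_{s_t,t}$ when $f_{s_t,t}$ is not strictly decreasing (so that $f_{s_t,t}^{-1}(\eta_t)$ may contain more than one point). Since (\ref{eq: x'}) picks an arbitrary element, I would observe that the argument in property \ref{lemma: 1_1} requires the specific consistent choice $x'_{s_t,t} = x_{s_t,t}$, which is permissible because $x_{s_t,t}$ itself lies in $f_{s_t,t}^{-1}(\eta_t)$. Once this bookkeeping is in place, each of the five properties follows cleanly from monotonicity, optimality of $\mathbf{x}_t^*$, and the resource budget.
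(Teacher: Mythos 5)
Your proof is correct. Properties \ref{lemma: 1_1}--\ref{lemma: x_prime} are handled exactly as in the paper: chain the inequalities $f_{i,t}(x'_{i,t})=f_t(\mathbf{x}_t)\geq f_{i,t}(x_{i,t})$ and $f_t(\mathbf{x}_t)\geq f_t(\mathbf{x}^*_t)\geq f_{i,t}(x^*_{i,t})$, then invoke monotonicity; your derivation of property \ref{lemma: 1_1} as the straggler instance of property \ref{lemma: x_prime} is a harmless repackaging of the paper's direct argument, and your remark about the non-uniqueness of $f_{s_t,t}^{-1}(\eta_t)$ is a legitimate subtlety that the paper itself glosses over (its monotonicity step also tacitly assumes the root is not chosen on a flat segment of $f_{i,t}$). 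Where you genuinely diverge is property \ref{lemma: multiplication}: the paper completes squares, writing each term $x_{i,t}^2-x_{i,t}x'_{i,t}-x_{i,t}x^*_{i,t}+x'_{i,t}x^*_{i,t}$ as $\tfrac12[(x_{i,t}-x'_{i,t})^2+(x_{i,t}-x^*_{i,t})^2+(x'_{i,t}+x^*_{i,t})^2]-x_{i,t}'^2-x_{i,t}^{*2}$ and discarding the non-negative bracket, whereas you split by the sign of $b_i=x_{i,t}-x^*_{i,t}$, use $0\leq a_i=x_{i,t}-x'_{i,t}\leq x_{i,t}\leq 1$ to drop the factor $a_i$, and bound $\sum|b_i|$ by the two budget constraints. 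Your route is more elementary (no algebraic identity to verify) and in fact gives more: since $|b_i|=x^*_{i,t}-x_{i,t}\leq x^*_{i,t}$ on the negative-sign set, it actually yields the sharper constant $-1$ in place of $-2$, which would propagate to a slightly better constant in Lemma~\ref{lemma:gG} and Theorem~\ref{Thm: regret}. Both arguments rely on the same implicit assumption that $x'_{i,t}\geq 0$ (you need it for $a_i\leq x_{i,t}$, the paper needs it for $\sum_i x_{i,t}'^2\leq 1$), so that is not a gap relative to the paper.
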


\newenvironment{solution} {\begin{proof}{\bfseries of Lemma \ref{lemm:property} }} {\end{proof}}

\begin{solution}  
	For any feasible solution, 
	\begin{align}\nonumber
		f_{s_t,t}(x_{s_t,t})=f_t(\mathbf{x}_t)\geq f_t(\mathbf{x}^*_t)\geq f_{s_t,t}(x^*_{s_t,t}).
	\end{align}
	As $f_{s_t,t}$ is a decreasing function, we have \ref{lemma: 1_1}.
	Similarly, we have
	\begin{equation}\label{eq: prime}
		\begin{aligned}
			f_t(\mathbf{x}_t)
			=\max_{i\in\mathcal{N}}f_{i,t}(x_{i,t})
			\geq f_{i,t}(x_{i,t}), ~ \forall i \in\mathcal{N},
		\end{aligned}
	\end{equation}
	\begin{equation}\label{eq: star}
		\begin{aligned}
			f_t(\mathbf{x}_t)\geq f_t(\mathbf{x}^*_t)=\max_{i\in\mathcal{N}}f_{i,t}(x^*_{i,t})\geq f_{i,t}(x^*_{i,t}),~ \forall i \in\mathcal{N}.
		\end{aligned}
	\end{equation}
	As $f_{i,t}$ is a decreasing function, $f_{i,t}(x'_{i,t})\geq f_{i,t}(\hat x)$ holds for all $\hat x$ such that $f_{i,t}(\hat x)\leq f_t(\mathbf{x}_t)$.
	With (\ref{eq: prime}) and (\ref{eq: star}), we futher have
	\begin{equation}\nonumber
		\begin{aligned}
			f_{i,t}(x'_{i,t})\geq f_{i,t}(x_{i,t})\quad \text{and} \quad	f_{i,t}(x'_{i,t})\geq f_{i,t}(x_{i,t}^*).
		\end{aligned}
	\end{equation}
	This implies \ref{lemma: 1_2} and \ref{lemma: x_prime}. 
	From \ref{lemma: 1_2}, we further infer $\sum_{i\in\mathcal{N}}x'_{i,t}\leq \sum_{i\in\mathcal{N}}x_{i,t}\leq 1$, which gives \ref{lemma: 1_3}.
	
	To prove \ref{lemma: multiplication}, we can rewrite its left-hand side as follows:
	\allowdisplaybreaks
	\begin{align}	\nonumber
		&~\quad\sum_{i\neq s_t}(x_{i,t}-x_{i,t}') (x_{i,t}-x_{i,t}^*)\\\nonumber
		&= \sum_{i\neq s_t} (x_{i,t}^2-x_{i,t}x_{i,t}'-x_{i,t}x_{i,t}^*+x_{i,t}'x_{i,t}^*)\\\nonumber
		&=\sum_{i\neq s_t}\frac{1}{2}\left[(x_{i,t}-x_{i,t}')^2+(x_{i,t}-x^*_{i,t})^2+(x_{i,t}'+x_{i,t}^*)^2\right]\\\nonumber
		&\quad -\sum_{i\neq s_t}x_{i,t}'^2-\sum_{i\neq s_t}x_{i,t}^{*2}\geq -\sum_{i\neq s_t}x_{i,t}'^2-\sum_{i\neq s_t}x_{i,t}^{*2}\geq -2.
	\end{align}
	The last step holds since $||\mathbf{x}_t||^2\leq 1$ and $x'_{i,t}\leq x_{i,t},\forall i\in\mathcal{N}$.
\end{solution}

\begin{lemma}\label{lemma:gG}
	Any instantaneous feasible solution $\mathbf{x}_t\in\mathcal{F}$ of the online min-max problem defined in (\ref{eq: minmax}) with Assumptions \ref{eq: cvx} and \ref{eq: gradient} satisfies
	\begin{equation}\label{eq: G_g}
		\begin{aligned}
			\left[\frac{	f_t(\mathbf{x}_t)-	f_t(\mathbf{x}^*_t)}{L}\right]^2\leq2+G_t^\mathsf{T}(\mathbf{x}_t-\mathbf{x}_t^*)
		\end{aligned}
	\end{equation}
\end{lemma}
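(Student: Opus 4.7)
\emph{Overall plan.} The plan is to chain two estimates. First, I would use the Lipschitz assumption plus the pointwise-max structure to pin the scaled cost gap $[(f_t(\mathbf{x}_t)-f_t(\mathbf{x}_t^*))/L]^2$ to the squared allocation gap at the straggler. Next, I would expand $G_t^\mathsf{T}(\mathbf{x}_t-\mathbf{x}_t^*)$ and show it is at least that same squared gap minus $2$. Putting the two inequalities together gives the claim. Almost all the structural input comes from Lemma \ref{lemm:property}; the new content is just assembling those parts correctly.

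\emph{First estimate.} By definition of the straggler, $f_t(\mathbf{x}_t)=f_{s_t,t}(x_{s_t,t})$, while the optimum satisfies $f_t(\mathbf{x}_t^*)\geq f_{s_t,t}(x_{s_t,t}^*)$ since $f_t$ is a pointwise max. Subtracting, invoking Assumption \ref{eq: gradient} on $f_{s_t,t}$, and using $x_{s_t,t}^*\geq x_{s_t,t}$ from part \ref{lemma: 1_1} of Lemma \ref{lemm:property} to drop the absolute value would yield $[(f_t(\mathbf{x}_t)-f_t(\mathbf{x}_t^*))/L]^2\leq (x^*_{s_t,t}-x_{s_t,t})^2$, reducing the claim to $(x^*_{s_t,t}-x_{s_t,t})^2\leq 2+G_t^\mathsf{T}(\mathbf{x}_t-\mathbf{x}_t^*)$.

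\emph{Second estimate.} Using the definition of $G_t$, I would split the dot product along the straggler index:
$$G_t^\mathsf{T}(\mathbf{x}_t-\mathbf{x}_t^*)=\sum_{i\neq s_t}(x_{i,t}-x'_{i,t})(x_{i,t}-x^*_{i,t})+(x^*_{s_t,t}-x_{s_t,t})\sum_{j\neq s_t}(x_{j,t}-x'_{j,t}),$$
where the minus sign in $G_{s_t,t}=-\sum_{j\neq s_t}(x_{j,t}-x'_{j,t})$ has been absorbed by flipping the sign of the straggler difference $x_{s_t,t}-x^*_{s_t,t}$. Part \ref{lemma: multiplication} of Lemma \ref{lemm:property} bounds the first summand below by $-2$. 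For the second summand, part \ref{lemma: x_prime} gives $x'_{j,t}\leq x^*_{j,t}$, so $\sum_{j\neq s_t}(x_{j,t}-x'_{j,t})\geq \sum_{j\neq s_t}(x_{j,t}-x^*_{j,t})=x^*_{s_t,t}-x_{s_t,t}$, where the final equality uses $\sum_i x_{i,t}=\sum_i x^*_{i,t}=1$. Since $x^*_{s_t,t}-x_{s_t,t}\geq 0$, this multiplies up to $(x^*_{s_t,t}-x_{s_t,t})^2$, and adding the $-2$ from part \ref{lemma: multiplication} delivers the target after a rearrangement.

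\emph{Expected obstacle.} The proof is largely bookkeeping once Lemma \ref{lemm:property} is in hand; the genuinely delicate spots are (i) carrying the minus sign on the straggler component of $G_t$ correctly through the expansion so that $(x^*_{s_t,t}-x_{s_t,t})$ appears with the right sign out front, and (ii) justifying the budget-tightness identity $\sum_i x_{i,t}=\sum_i x^*_{i,t}=1$, which relies on DORA's re-allocation step always giving all slack to the straggler and on the monotonicity of $f_{i,t}$ allowing a budget-exhausting optimum without loss of generality. Everything else follows by direct substitution.
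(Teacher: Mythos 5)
Your proposal is correct and follows essentially the same route as the paper's proof: the same split of $G_t^\mathsf{T}(\mathbf{x}_t-\mathbf{x}_t^*)$ along the straggler index, the same use of parts \ref{lemma: 1_1}, \ref{lemma: x_prime}, and \ref{lemma: multiplication} of Lemma \ref{lemm:property} to reach $-2+(x^*_{s_t,t}-x_{s_t,t})^2$, and the same Lipschitz step via $f_t(\mathbf{x}_t)=f_{s_t,t}(x_{s_t,t})$ and $f_t(\mathbf{x}^*_t)\geq f_{s_t,t}(x^*_{s_t,t})$. The budget-tightness identity you flag is used (implicitly) in the paper's proof as well, via $\sum_{i\neq s_t}(x_{i,t}-x^*_{i,t})=(1-x_{s_t,t})-(1-x^*_{s_t,t})$, so your treatment matches theirs there too.
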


\newenvironment{solution_2} {\begin{proof}{\bfseries of Lemma \ref{lemma:gG} }} {\end{proof}}

\begin{solution_2}
	From Lemma \ref{lemm:property}, we have $\sum_{i\neq s_t}(x_{i,t}-x_{i,t}') (x_{i,t}-x_{i,t}^*)\geq -2$, $- (x_{s_t,t}-x_{s_t,t}^*)\geq 0$, and $-x_{i,t}'\geq -x_{i,t}^*$.
	Thus, we have
	\allowdisplaybreaks\begin{align}\nonumber
		G_t^\mathsf{T}(\mathbf{x}_t-\mathbf{x}_t^*)&=\sum_{i\neq s_t}(x_{i,t}-x_{i,t}')(x_{i,t}-x_{i,t}^*)\\\nonumber
		&\quad - (x_{s_t,t}-x_{s_t,t}^*)\sum_{i\neq s_t}(x_{i,t}-x_{i,t}')\\\nonumber
		&\geq -2+(x_{s_t,t}^*-x_{s_t,t})\sum_{i\neq s_t}(x_{i,t}-x_{i,t}')\\\nonumber
		&\geq -2+(x_{s_t,t}^*-x_{s_t,t})\sum_{i\neq s_t}(x_{i,t}-x_{i,t}^*)\\\nonumber
		&\geq  -2+\left(x_{s_t,t}^*-x_{s_t,t}\right)^2.
	\end{align}
	The last inequality holds since $\sum_{i\neq s_t}(x_{i,t}-x'_{i, t})\geq\sum_{i\neq s_t}(x_{i,t}-x_{i,t}^*)= (1-x_{s_t,t})-(1-x^*_{s_t,t})=x^*_{s_t,t}-x_{s_t,t}$.
	
	\noindent With Assumption \ref{eq: gradient}, we have
	\begin{align}\nonumber
	L|x_{s_t,t}^*-x_{s_t,t}| 
	\geq |f_{s_t,t}(x_{s_t,t})-f_{s_t,t}(x^*_{s_t,t})|\geq |f_t(\mathbf{x}_t)-	f_t(\mathbf{x}^*_t)|.
\end{align}

	
	
	\noindent Therefore, we further have
	\begin{align}\nonumber
		2+G_t^\mathsf{T}(\mathbf{x}_t-\mathbf{x}_t^*)&\geq\left(x_{s_t,t}^*-x_{s_t,t}\right)^2\geq\left[\frac{	f_t(\mathbf{x}_t)-	f_t(\mathbf{x}^*_t)}{L}\right]^2.
	\end{align}
\end{solution_2}

We are now ready to proceed with deriving the dynamic regret bound of \texttt{DORA}.
With the updating rule of \texttt{DORA} in (\ref{eq: rule}), we have
	\allowdisplaybreaks\begin{align}\nonumber
		&~\quad||\mathbf{x}_{t+1}-\mathbf{x}_{t}^*||^2 =||\mathbf{x}_{t}-\mathbf{x}_{t}^*-\alpha G_t||^2\\
		&=||\mathbf{x}_{t}-\mathbf{x}_{t}^*||^2+\alpha^2||G_t||^2-2\alpha G_t^\mathsf{T}(\mathbf{x}_{t}-\mathbf{x}_{t}^*). \label{eq: norm}	
	\end{align}
	By summing up (\ref{eq: G_g}) over time, we have
	\allowdisplaybreaks\begin{align}\nonumber
		&\quad\sum_{t=1}^T (f_t(\mathbf{x}_t)-f_t(\mathbf{x}_t^*))^2/L^2\leq\sum_{t=1}^T G_t^\mathsf{T}(\mathbf{x}_{t}-\mathbf{x}_{t}^*)+2\\\nonumber
		&\overset{(a)}{=}\sum_{t=1}^T\frac{1}{2\alpha}(||\mathbf{x}_{t}-\mathbf{x}_{t}^*||^2-||\mathbf{x}_{t+1}-\mathbf{x}_{t}^*||^2)+\sum_{t=1}^T\frac{4+\alpha||G_t||^2}{2}\\\nonumber
		&\overset{(b)}{\leq}\frac{1}{2\alpha}\sum_{t=1}^T(||\mathbf{x}_{t}||^2-||\mathbf{x}_{t+1}||^2)+\frac{1}{2\alpha}\sum_{t=1}^T2(\mathbf{x}_{t+1}-\mathbf{x}_{t})^\mathsf{T}\mathbf{x}_{t}^*\\\nonumber
		&\quad+\frac{T(4+\alpha)}{2}\\\nonumber
		&=\frac{1}{2\alpha}(||\mathbf{x}_{1}||^2-||\mathbf{x}_{T+1}||^2)+\frac{1}{\alpha}\mathbf{x}_{T+1}^\mathsf{T}\mathbf{x}_{T}^*-\frac{1}{\alpha}\mathbf{x}_{1}^\mathsf{T}\mathbf{x}_{1}^*\\\nonumber
		&\quad+\frac{1}{\alpha}\sum_{t=2}^T(\mathbf{x}_{t-1}^*-\mathbf{x}_{t}^*)^\mathsf{T}\mathbf{x}_{t}+\frac{T(4+\alpha) }{2}\\\nonumber
		&\overset{(c)}{\leq} \frac{3}{2\alpha}+\frac{1}{\alpha}\sum_{t=2}^T(\mathbf{x}_{t-1}^*-\mathbf{x}_{t}^*)^\mathsf{T}\mathbf{x}_{t}+\frac{T(4+\alpha)}{2}\\\nonumber
		&\overset{(d)}{\leq} \frac{3}{2\alpha}+\frac{P_T}{\alpha}+\frac{T(4+\alpha )}{2},
	\end{align}
	where $(a)$ is due to (\ref{eq: norm}), $(b)$ holds since
	\allowdisplaybreaks\begin{align}\nonumber
		&~\quad||\mathbf{x}_{t}-\mathbf{x}_{t}^*||^2-||\mathbf{x}_{t+1}-\mathbf{x}_{t}^*||^2\\\nonumber
		&=||\mathbf{x}_{t}||^2-||\mathbf{x}_{t+1}||^2+2(\mathbf{x}_{t+1}-\mathbf{x}_{t})^\mathsf{T}\mathbf{x}_{t}^*,
	\end{align}
	and $||G_t||^2\leq 1$, $(c)$ holds since $||\mathbf{x}_{t}||^2\leq 1$,  and $(d)$ holds since
	\begin{equation}\nonumber
		\begin{aligned}
			\sum_{t=2}^T(\mathbf{x}_{t-1}^*-\mathbf{x}_{t}^*)^\mathsf{T}\mathbf{x}_{t}\leq \sum_{t=2}^T||\mathbf{x}_{t-1}^*-\mathbf{x}_{t}^*||\cdot||\mathbf{x}_{t}||.
		\end{aligned}
	\end{equation}
	As the inequality $\frac{1}{T}\sum_t a_t\leq \sqrt{\frac{1}{T}\sum_t a_t^2}$ holds for any $a_t$, we further have
	\allowdisplaybreaks\begin{align}\nonumber
		\textnormal{Reg}_T^d &= \sum_{t=1}^T f_t(\mathbf{x}_t)-f_t(\mathbf{x}_t^*)\leq \sqrt{T\sum_{t=1}^T (f_t(\mathbf{x}_t)-f_t(\mathbf{x}_t^*))^2}\\\nonumber
		&\leq \sqrt{TL^2(\frac{3}{2\alpha}+\frac{P_T}{\alpha}+\frac{T(4+\alpha)}{2})}.
	\end{align}
	
\end{proof}

\end{document}